\newtheoremstyle{custom}{3pt}{3pt}{\slshape}{}{\bfseries}{.}{ }{}
\theoremstyle{custom}
\newtheorem{theorem}{Theorem}[section]
\newtheorem{proposition}[theorem]{Proposition}
\newtheorem{lemma}[theorem]{Lemma}
\newtheorem{conjecture}[theorem]{Conjecture}
\theoremstyle{definition}
\theoremstyle{remark}
\newtheorem{remark}[theorem]{Remark}
\DeclareMathOperator{\qcap}{qcap}
\DeclareMathOperator{\Qcap}{Qcap}
\DeclareMathOperator{\QMC}{QMC}
\DeclareMathOperator{\QMF}{QMF}
\DeclareMathOperator{\IMM}{IMM}
\DeclareMathOperator{\TNS}{TNS}
\DeclareMathOperator{\GL}{GL}
\DeclareMathOperator{\PGL}{PGL}
\DeclareMathOperator{\tspan}{span}
\DeclareMathOperator{\ttrace}{trace}
\DeclareMathOperator{\trank}{rank}
\DeclareMathOperator{\coef}{coeff}
\DeclareMathOperator{\sgn}{sign}
\DeclareMathOperator{\Id}{Id}
\newcommand{\odd}{\mathit{odd}}
\newcommand{\even}{\mathit{even}}
\newcommand{\FS}{\mathfrak S}
\newcommand{\BZ}{\mathbb Z}\newcommand{\ZZ}{\mathbb Z}
\newcommand{\BC}{\mathbb C}\newcommand{\CC}{\mathbb C}\newcommand{\bbC}{\mathbb{C}}
\newcommand{\BP}{\mathbb P}\newcommand{\pp}[1]{\mathbb P^{#1}}\newcommand{\bbP}{\mathbb{P}}
\newcommand{\cS}{\mathcal S}
\newcommand{\ot}{\otimes}
\newcommand{\op}{\oplus}
\newcommand{\ra}{\to}
\newcommand{\otc}{\otimes\cdots\otimes}
\newcommand{\La}[1]{\Lambda^{#1}}
\newcommand{\hd}{,\dots,}
\renewcommand{\b}{\beta}
\newcommand{\s}{\sigma}
\newcommand{\vvirg}{, \dots ,}
\newcommand{\eps}{\varepsilon}
\newcommand{\udelta}{{\underline{\delta}}}
\newcommand{\ueps}{{\underline{\varepsilon}}}
\newcommand{\ueta}{{\underline{\eta}}}
\newcommand{\bfK}{\mathbf{K}}
\begin{document}
\title{\texorpdfstring{Matrix product states and the \\ quantum max-flow/min-cut conjectures}{Matrix product states and the quantum max-flow/min-cut conjectures}}
\author{Fulvio Gesmundo}
\affiliation{QMATH, University of Copenhagen, Universitetsparken 5, 2100 K\o{}benhavn \O{}, Denmark}
\email{fulges@math.ku.dk}
\author{J.~M. Landsberg}
\affiliation{Department of Mathematics, Texas A\&M University, College Station, TX 77843-3368, USA}
\email{jml@math.tamu.edu}
\author{Michael Walter}
\affiliation{QuSoft, Korteweg-de Vries Institute for Mathematics, Institute for Theoretical Physics, Institute for Logic, Language and Computation, University of Amsterdam, 1098 XG Amsterdam, Netherlands}
\affiliation{Stanford Institute for Theoretical Physics, Stanford University, Stanford, CA 94305, USA}
\email{m.walter@uva.nl}
\hypersetup{pdftitle={Matrix product states and the quantum max-flow/min-cut conjectures}, pdfauthor={Fulvio Gesmundo, J. M. Landsberg, Michael Walter}}

\begin{abstract}
In this note we discuss the geometry of matrix product states with periodic boundary conditions and provide three infinite sequences of examples where the quantum max-flow is strictly less than the quantum min-cut.
In the first we fix the underlying graph to be a $4$-cycle and verify a prediction of Hastings that inequality occurs for infinitely many bond dimensions.
In the second we generalize this result to a $2d$-cycle.
In the third we show that the $2d$-cycle with periodic boundary conditions gives inequality for all $d$ when all bond dimensions equal two, namely a gap of at least~$2^{d-2}$ between the quantum max-flow and the quantum min-cut.
\end{abstract}
\maketitle

\section{Introduction}

A {\it tensor network} associated to a graph is a way of constructing tensors in large spaces from smaller building-block tensors.
From the perspective of algebraic geometry, tensor networks provide a natural way of constructing varieties of tensors.
The use of graphs to study tensors dates back at least to Clifford in 1881 (see Ref.~\onlinecite[Fig.~2.11.1]{MR2865915}).
In applied mathematics and physics, tensor networks are versatile tools for efficiently approximating high-dimensional data, such as ground states of many-body quantum systems in condensed matter physics (see, e.g., Refs.~\onlinecite{MR2535056,verstraete2009tnsreview,orus2014tnsreview}).

In this paper we focus exclusively on {\it translation-invariant matrix product states with periodic boundary conditions} as defined in \S\ref{mpsdef}.
We use them to obtain three sequences of examples of tensor networks where the quantum max-flow (\S\ref{subsec:qmf}) is strictly less than the quantum min-cut (\S\ref{subsec:qmc}) for certain partitions.
Our results disprove a natural conjecture and verify a numerical prediction from Ref.~\onlinecite{MR3613509} (\S\ref{subsec:results}).

\subsection{Matrix product states with periodic boundary conditions}\label{mpsdef}
In this paper, we work exclusively with tensor networks associated with the oriented cyclic graphs~$C_m$, as in Figure~\ref{fig:circletns}~(a), and one building-block tensor~$T$ taken from $\BC^N\ot \BC^n\ot (\BC^n)^*$, which we write as $A\ot B\ot B^*$.

\begin{figure}[!htb]
\begin{center}
\raisebox{2cm}{(a)}
\includegraphics{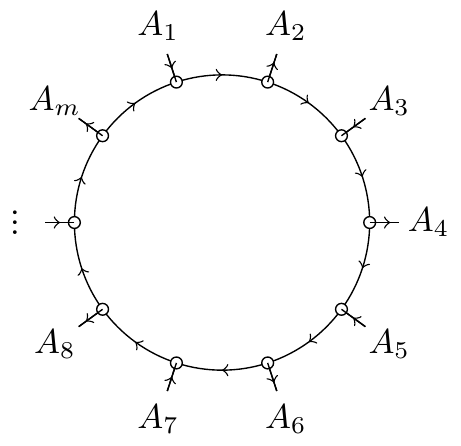}
\qquad\raisebox{2cm}{(b)}
\raisebox{-.1cm}{\includegraphics{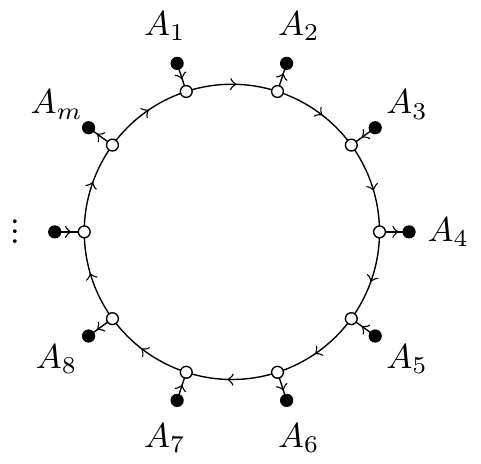}}
\end{center}
\caption{(a) Cyclic graph $C_m$ with $m$~vertices and clockwise orientation.
This graph is used for constructing a matrix product state.
The $m$~copies of the $A$ vector space are labeled $A_1,\dots,A_m$.
Arrows on the edges on the circle indicate $B$ (outgoing) vs.\ $B^*$ (incoming).
Arrows on the external edges indicate partition into sources $S$ (incoming) and sinks $\bar S$ (outgoing).
In our case the quantum capacities are $n$ for edges on the circle and $N$ for external edges.
(b) Extented graph $\widehat C_m$ obtained by adding terminal vertices (black) at the dangling ends of the external edges.}
\label{fig:circletns}
\end{figure}

One constructs tensors in~$A^{\ot m}=(\BC^N)^{\ot m}$ from this set-up as follows:
For each vertex, we associate a copy of $A$ to the external (physical) edge, a copy of $B$ to the outgoing edge from the circle, and a copy of $B^*$ to the incoming edge in the circle.
We thus obtain a tensor $\Phi(T)\in A^{\ot m}$ by
\begin{align*}
\Phi(T)=\operatorname{contract}(T^{\ot m}),
\end{align*}
where we place a copy of~$T$ at each vertex, and then contract $B\ot B^*\ra \BC$ along the edges on the circle.
We will refer to the tensor $\Phi(T)\in A^{\ot m}$ as the {\it tensor network state} associated to~$(C_m,T)$.
The dimensions $N,n$ are known as the \emph{bond dimensions}.

In physics, tensors such as $\Phi(T)$ are also known as \emph{translation-invariant matrix product states with periodic boundary conditions}.
In indices, if $a_1\hd a_N$ is a basis of $A$, $b_1\hd b_n$ a basis of $B$ with dual basis $\b^1\hd \b^n$, so we may write $T=\sum_{i,s,t}T^{is}_{\;\;\;\,t} a_i\ot b_s\ot\b^t$, then
\begin{align}\label{eq:contraction in coordinates}
\Phi(T)=\sum T^{i_1s_1}_{\;\;\;\;\;\;\;s_m} T^{i_2s_2}_{\;\;\;\;\;\;\;s_1}\cdots T^{i_ms_m}_{\;\;\;\;\;\;\;s_{m-1}} a_{i_1}\otc a_{i_m}.
\end{align}
If we think of each $T^i=(T^{is}_{\;\;\;\,t})$ as an $n\times n$ matrix, we may write the $a_{i_1}\otc a_{i_m}$ coefficient of $\Phi(T)$ as
$$
 \ttrace(T^{i_m}\cdots T^{i_1}),
$$
which explains the terminology ``matrix product state''.
``Periodic boundary conditions'' refers to the fact that the graph is a cycle and not a linear graph, and ``translation-invariant'' means that the same tensor~$T$ is placed at each vertex.
This implies a cyclic $\ZZ_m$-symmetry for $\Phi(T)$ that will play an important role in this paper.

Let $\TNS(C_m,N,n)\subset A^{\ot m}$ denote the set of all tensor network states $\Phi(T)$ associated to $C_m$, and let
$\BP\overline{\TNS}(C_m,N,n)\subset \BP(A^{\ot m})$ denote its Zariski closure in projective space.
In other words, $\Phi$ defines a rational map
\begin{align*}
\Phi: \BP (A\ot B\ot B^*) \dashrightarrow \BP(A^{\ot m})
\end{align*}
(which we denote by the same symbol~$\Phi$), and $\BP \overline{\TNS}(C_m,N,n)$ is the closure of its image.
Note that the image is linearly degenerate, lying in the space of invariants $(A^{\ot m})^{\BZ_m}$, where $\BZ_m$ acts by cyclically permuting the vertices.
Also note that the group~$\GL(B)$, the invertible linear maps $B\ra B$, acts on~$B$ and $B^*$; so $\PGL(B)=\GL(B)/\{\BC^*\Id\}$ acts on $\BP(A\ot B\ot B^*)$, preserving the fibers of~$\Phi$.
Thus the generic fiber of $\Phi$ contains a space isomorphic to $\PGL(B)$.

\subsection{Quantum max-flow}\label{subsec:qmf}
Now partition the external edges into two sets~$S$, $\bar S$ (`sources' and `sinks', see Figure~\ref{fig:circletns}).
This induces a splitting $A^{\ot m}=(\ot_{s\in S}A_s) \ot (\ot_{t\in\bar S}A_t)$.
Given $\Phi(T)\in A^{\ot m}$, we get an induced linear map
\begin{align*}
  \Phi(T)_{S,\bar S}: \bigotimes_{s\in S}A_s^* \to \bigotimes_{t\in\bar S}A_t
\end{align*}
(called a {\it flattening} in the geometry literature).
The {\it quantum max-flow} is defined as the maximal rank over all $\Phi(T)$ of this flattening~\cite{MR3513725}
\begin{align*}
  \QMF'(C_m,(S,\bar S), N,n):=\max_{T\in A\ot B\ot B^*} \trank\left(\Phi(T)_{S,\bar S}\right).
\end{align*}
The prime reminds us that we place the same tensor~$T$ at each vertex.
When $N=n$, we suppress it from the notation and just write $\QMF'(C_m,(S,\bar S), N)$.

\subsection{Quantum min-cut}\label{subsec:qmc}
To define the quantum min-cut, it is useful to define an extended graph $\widehat C_m$ by adding terminal vertices at the dangling ends of the external edges, as in Figure~\ref{fig:circletns},~(b).
Now we may also think of the partition $(S,\bar S)$ as a partition of the \emph{terminals} into two sets.
A  \emph{cut} in this situation is a partition of the vertices of $\widehat C_m$ into two sets $(\cS$, $\bar\cS)$, with $S\subseteq\cS$ and $\bar S\subseteq\bar\cS$.
Define the {\it quantum capacity} of a cut $(\cS,\bar\cS)$ by
$$
\Qcap(\cS,\bar\cS):=\prod_{v\in \cS,\ w\in \bar\cS,\ vw\in E} \qcap(vw),
$$
where the quantum capacity of an edge, $\qcap(vw)$, is the dimension of the vector space associated to it.
In our case the quantum capacities are $n$ for edges on the circle and $N$ for external edges.
Following Ref.~\onlinecite{MR3513725}, define the \emph{quantum min-cut} by
\begin{align*}
  \QMC(C_m,(S,\bar S),N,n):=\min_{(\cS,\bar\cS)} \Qcap(\cS,\bar\cS),
\end{align*}
where we minimize over all cuts $(\cS,\bar\cS)$ for $(S,\bar S)$.
Again we write $\QMC(C_m,(S,\bar S),N)$ when $N=n$.

\subsection{Quantum max-flow vs.\ quantum min-cut}\label{subsec:results}
In Ref.~\onlinecite{MR3513725} the authors propose that tensor networks with physical edges divided into two sets can be viewed as ``transporting'' linear-algebraic quantities such as rank and entanglement, and are properly viewed as quantum analogs of graphs modeling flow networks.
In the classical case, that is, for flow networks, it is well-known that the maximal flow passing from sources to sinks is equal to the minimum cut separating the sources from sinks---this is the famous \emph{max-flow min-cut theorem}.
In the quantum case, it is well-known and easy to see that
\begin{align}\label{eq:weak duality}
  \QMF'(C_m,(S,\bar S),(N,n)) \leq \QMC(C_m,(S,\bar S),(N,n))
\end{align}
and similarly for arbitrary graphs.
Indeed, any cut $(\cS,\bar\cS)$ induces a factorization of the linear map $\Phi(T)_{S,\bar S}$ through a vector space of dimension equal to the quantum capacity of the cut, which implies~\eqref{eq:weak duality} at once.

In Ref.~\onlinecite{MR3513725}, the authors studied to what extent \emph{equality} holds in the quantum case, prompted by the ``quantum max-flow/min-cut conjecture'' from Ref.~\onlinecite[Conjecture C.1]{calegari2010positivity}.
Note that in the quantum case, the quantum min-cut is still straight-forward to compute (it can be readily reduced to computing a classical min-cut), but the quantum max-flow may be difficult to compute directly.

The original conjecture was vastly more general than the set-up here, but it in particular implied that, for all $(S,\bar S)$ and $N$, $\QMF'(C_m,(S,\bar S),N)=\QMC(C_m,(S,\bar S),N)$.
This was shown to be false in Ref.~\onlinecite{MR3513725}, namely it was proved that $\QMF'(C_4,(S,\bar S), 2) = 3 < 4 = \QMC(C_4, (S,\bar S), 2)$, where $S=\{1,3\}$, $\bar S=\{2,4\}$.
In Ref.~\onlinecite{MR3513725}, the cyclic graph $C_4$ was depicted as in Figure~\ref{fig:mod4},~(a), where the vertices in~$S$ are on the left and vertices in~$\bar S$ on the right.

\begin{figure}
\begin{center}
\raisebox{1.4cm}{(a)}
\includegraphics{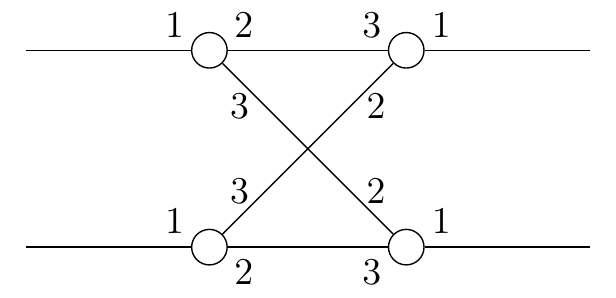}
\qquad\raisebox{1.4cm}{(b)}
\includegraphics{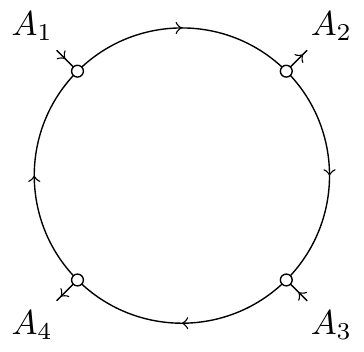}
\end{center}
\caption{(a) Cyclic graph $C_4$ as presented in Ref.~\onlinecite{MR3513725}. The labels $1$,$2$,$3$ refers to the spaces $A$,$B$,$B^*$, respectively.
(b) The same graph, but presented as in Figure~\ref{fig:circletns}. The cyclic $\ZZ_4$-symmetry is obvious.}
\label{fig:mod4}
\end{figure}

This raised the question whether weaker forms of a quantum max-flow/min-cut conjecture might be true, such as the following (stated for simplicity only for cyclic graphs):

\begin{conjecture}\label{conjecture:weak}
  For any partition~$(S,\bar S)$ of the external edges, there exists $N_0$ such that, for all $N\geq N_0$,
  \[ \QMF'(C_m, (S, \bar S), N) = \QMC(C_m, (S,\bar S), N). \]
\end{conjecture}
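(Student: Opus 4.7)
The plan is to attack the reverse of the weak duality bound~\eqref{eq:weak duality}, i.e.\ to prove $\QMF'(C_m,(S,\bar S),N) \geq \QMC(C_m,(S,\bar S),N)$ for $N$ sufficiently large. First I would make the right-hand side explicit: viewing $(S,\bar S)$ also as a partition of the vertices of~$C_m$ (by identifying each vertex with its external edge), once $N$ is large enough that no minimising cut uses an external edge, $\QMC(C_m,(S,\bar S),N)=n^k$, where $k$ is the number of circle edges joining an $S$-vertex to an $\bar S$-vertex. (For the alternating partition of~$C_4$ one has $k=4$.)

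The core of the attempt would then be to construct an explicit building-block tensor $T_N\in A\otimes B\otimes B^*$ with $A=\BC^N$ whose flattening $\Phi(T_N)_{S,\bar S}$ attains rank~$n^k$. A natural starting point is $N\geq n^2$, with $n^2$ basis vectors of~$A$ used to index a basis of~$\mathrm{End}(B)$, for example $T=\sum_{s,t=1}^{n} a_{(s,t)}\otimes b_s\otimes \beta^t$, so that the matrices $T^{(s,t)}$ are the elementary matrix units. The traces appearing in~\eqref{eq:contraction in coordinates} are then combinatorially explicit products of Kronecker deltas that I would analyse directly, and then perturb along the remaining $N-n^2$ directions of~$A$ to try to fill out the full rank~$n^k$. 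Once even a single~$T_0$ satisfies $\trank\Phi(T_0)_{S,\bar S}=n^k$, lower semicontinuity of rank promotes this to an open condition in~$T$, yielding $\QMF'(C_m,(S,\bar S),N)=n^k$ for all sufficiently large~$N$.

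To organise the analysis I would exploit the two symmetries already visible in the set-up: the $\PGL(B)$-action on the fibres of~$\Phi$, which allows one to normalise~$T$ up to gauge on the bond indices, and the cyclic $\BZ_m$-symmetry of~$\Phi(T)$, which decomposes the flattening into $\BZ_m$-isotypic blocks whose ranks could in principle be controlled one character at a time.

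The step I expect to be the main obstacle -- and the reason one should be suspicious of the conjecture in the first place -- is precisely the translation-invariance constraint that \emph{the same} tensor~$T$ occupies every vertex. If distinct tensors were allowed at distinct vertices, as in the unprimed $\QMF$ of Ref.~\onlinecite{MR3513725}, a product construction supported on the $k$ cut edges gives $\QMF\geq n^k$ with room to spare. In the translation-invariant setting, however, all $m$ factors in the product $T^{i_m}\cdots T^{i_1}$ come from the \emph{same}~$T$, which forces polynomial identities among the entries of $\Phi(T)_{S,\bar S}$ that may collapse its rank irreversibly, no matter how large~$N$ grows. Showing that these cyclic relations are nevertheless generically non-degenerate for $N\gg 0$ is the real crux of the conjecture, and I would expect any naive rank-by-perturbation argument to break down at exactly that point.
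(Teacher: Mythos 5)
You are attempting to \emph{prove} this conjecture, but it is in fact false, and the paper's entire point is to refute it. Theorem~\ref{thm:mod4} gives $\QMF'(C_4,(\odd,\even),N)\leq N^2-1 < N^2 = \QMC(C_4,(\odd,\even),N)$ for every $N\equiv 2,3\pmod 4$, so no threshold $N_0$ can exist; Theorems~\ref{thm:mod4higher} and~\ref{thm:2d} give further infinite families. Your closing remarks correctly identify the culprit --- the translation-invariance constraint forces cyclic relations on $\Phi(T)_{S,\bar S}$ that might ``collapse its rank irreversibly'' --- but the proposal stops at suspicion instead of turning that intuition into an obstruction. The paper does exactly that: since the same~$T$ sits at every vertex, $\Phi(T)$ always lies in the $\ZZ_4$-invariant subspace of $A^{\ot4}$, which, with respect to the odd-even flattening, decomposes as $S^2(S^2A)\op\La2(\La2A)$. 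Hence $\Phi(T)_{\odd,\even}$ is block diagonal with a symmetric block on $S^2A$ and a \emph{skew-symmetric} block on $\La2A$. A skew-symmetric matrix has even rank, while $\dim\La2A=\binom N2$ is odd precisely when $N\equiv 2,3\pmod 4$, so for such $N$ the flattening can never be full rank --- no perturbation of $T$ and no enlargement of $N$ within those residue classes can fix this.

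Two secondary issues. First, the conjecture's single argument $N$ means $N=n$ by the paper's convention, so you cannot take $N\geq n^2$ with $N$ and $n$ independent; and with $N=n$, the quantum min-cut for $(\odd,\even)$ on $C_4$ is the external cut $N^2$, not the circle cut $n^4$, so the target rank in your construction is also off. Second, the salvageable part of your plan --- taking $T=\sum_{s,t}a_{(s,t)}\ot b_s\ot\b^t$ so the matrices $T^i$ are matrix units, which forces $N=n^2$ --- is precisely the iterated-matrix-multiplication construction of Remark~\ref{rem:immrem}, and it \emph{does} show $\QMF'=\QMC$ for all perfect-square bond dimensions. That supports only the weaker form of the conjecture mentioned in the remark following Theorem~\ref{thm:mod4higher} (equality for infinitely many $N$); the ``for all $N\geq N_0$'' version that you set out to prove is irreparably broken by the parity obstruction above.
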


We will be particularly interested in the case $m=2d$, where $S$ corresponds to the odd-labeled indices and $\bar S$ to the even-labeled indices, as in Figure~\ref{fig:circletns},~(b).
We will write $(S,\bar S)=(\odd,\even)$ accordingly.
The example from Ref.~\onlinecite{MR3513725} is of this form, see Figure~\ref{fig:mod4},~(b).
Numerical evidence reported in Ref.~\onlinecite{MR3613509} suggested a cyclic dependency of $\QMF'(C_4, (\odd,\even), N)$ on $N \bmod 4$ in the situation of Figure~\ref{fig:mod4}, and, therefore, that Conjecture~\ref{conjecture:weak} is false.
Our first result proves that this is indeed the case:

\begin{theorem}\label{thm:mod4}
  For all $N$,
  \[
  \QMF'(C_4,(\odd,\even),N) \leq \begin{cases}
  N^2 & \text{if $N \equiv 0,1 \pmod 4$} \\
  N^2 - 1 & \text{if $N \equiv 2,3 \pmod 4$},
  \end{cases}
  \]
  while $\QMC(C_4,(\odd,\even),N)=N^2$ for all $N$.
  Moreover, equality holds for all square bond dimensions $N=k^2$, so in an infinite number of cases where $N\equiv0,1\pmod4$.
\end{theorem}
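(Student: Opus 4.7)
The plan proceeds in three parts: (i) a direct computation of $\QMC$, (ii) a symmetry argument bounding the rank of the flattening, and (iii) an explicit construction realizing the bound when $N=k^2$. For~(i), I enumerate the cuts of the extended graph $\widehat C_4$: the ``trivial'' cut $\cS=\{1',3'\}$ severs only the two external edges at the source terminals (of capacity~$N$ each), giving capacity $N^2$. Any cut that also places a circle vertex~$v$ in $\cS$ severs the two circle edges incident to~$v$, already costing $n^2=N^2$, and any further terms only increase the capacity. Hence $\QMC(C_4,(\odd,\even),N)=N^2$.

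The main ingredient is the upper bound on $\QMF'$, which exploits the full $\BZ_4$-invariance of $\Phi(T)$. From~\eqref{eq:contraction in coordinates}, the flattening $F=\Phi(T)_{S,\bar S}\colon A_1^*\ot A_3^*\to A_2\ot A_4$ has matrix entries $F_{(i_2,i_4),(i_1,i_3)}=\ttrace(T^{i_4}T^{i_3}T^{i_2}T^{i_1})$. Identifying $A^*\cong A$ via a fixed basis, I view $F$ as an endomorphism of $V=A\ot A$. The cyclic shift by~$2$ preserves the partition $(\odd,\even)$ and gives $FP=PF$, where $P\in\mathrm{End}(V)$ is the swap; hence $F$ splits as $F_s\op F_a$ along the eigenspace decomposition $V=\mathrm{Sym}^2 A\op\La 2 A$. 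The cyclic shift by~$1$ exchanges $\odd$ with $\even$, and a short manipulation using cyclicity of the trace yields the additional relation $F^T=PF$. Restricted to the two blocks this forces $F_s$ to be symmetric and $F_a$ to be skew-symmetric. Since the rank of a skew-symmetric matrix is always even, $\trank(F_a)\le\binom{N}{2}$ with even parity; this gives $\trank(F_a)\le\binom{N}{2}-1$ precisely when $\binom{N}{2}$ is odd, i.e.\ when $N\equiv 2,3\pmod 4$. Combined with $\trank(F_s)\le\binom{N+1}{2}$, this yields the two bounds in the statement.

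For equality when $N=k^2$, I exhibit a tensor that saturates the bound. Write $A=V\ot W$ and $B=V\ot W$ with $\dim V=\dim W=k$, and set $T^{v,w}=E_{v,w}\ot\Id_W$, where $E_{v,w}\in\mathrm{End}(V)$ is the elementary matrix. The relation $E_{v,w}E_{v',w'}=\delta_{w,v'}E_{v,w'}$ collapses each trace $\ttrace(T^{i_4}T^{i_3}T^{i_2}T^{i_1})$ to a product of four Kronecker deltas, and a direct computation shows that $F$ sends $a^{(v_1,w_1)}\ot a^{(v_3,w_3)}$ to $k\cdot a_{(w_3,v_1)}\ot a_{(w_1,v_3)}$. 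This is, up to the scalar~$k$, a bijection of bases of $(V\ot W)^{\ot 2}$, so $F$ has full rank $k^4=N^2$. Since $N=k^2$ always satisfies $N\equiv 0,1\pmod 4$, this matches the upper bound from~(ii).

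I expect the main conceptual step to be extracting the relation $F^T=PF$ from the full $\BZ_4$-action: the $\BZ_2$-subaction alone produces only the block decomposition, and it is the additional shift-by-$1$ invariance that enforces skew-symmetry of the $\La 2 A$-block and hence the parity obstruction; everything else reduces to a routine cut enumeration and an explicit trace calculation.
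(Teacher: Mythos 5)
Your argument is correct and follows essentially the same route as the paper's proof: both exploit the $\ZZ_4$-action to block-diagonalize the flattening into a symmetric block on $S^2A$ and a skew-symmetric block on $\La2A$, then invoke the evenness of skew-symmetric rank together with the parity of $\dim\La2A=\binom{N}{2}$; your relation $F^T=PF$ is the matrix restatement of the paper's containment $(A^{\ot4})^{\ZZ_4}\subseteq S^2(S^2A)\op\La2(\La2A)$, and your explicit tensor for $N=k^2$ is, up to a reindexing of the bond space, the iterated matrix-multiplication tensor the paper invokes via Remark~\ref{rem:immrem}. One small caveat: in your $\QMC$ enumeration the claim that enlarging $\cS$ past a single circle vertex ``only increases the capacity'' is false as stated (e.g.\ $\cS=\{1',3',1,2,3,4\}$ again gives capacity exactly $N^2$), though the conclusion $\QMC=N^2$ is unaffected since a direct check shows every cut costs at least $N^2$.
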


Numerical evidence suggests that our bound is tight.\cite{MR3613509}
For $m > 4$, numerical evidence is much harder to obtain.
Yet we provide a partial generalization of Theorem~\ref{thm:mod4} to higher cycles:

\begin{theorem}\label{thm:mod4higher}
  For all $d\equiv2\pmod4$ and $N$,
  \[
  \QMF'(C_{2d},(\odd,\even),N) \leq \begin{cases}
  N^d & \text{if $N \not\equiv 3\pmod 4$} \\
  N^d - 1 & \text{if $N\equiv 3\pmod 4$},
  \end{cases}
  \]
  while $\QMC(C_{2d},(\odd,\even),N)=N^d$ for all $N$.
  Again we have equality for all $N=k^2$.
\end{theorem}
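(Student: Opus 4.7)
The plan is to parallel the argument for Theorem~\ref{thm:mod4}, generalizing each step from the 4-cycle to the $2d$-cycle and isolating where the hypothesis $d\equiv 2\pmod 4$ enters. The quantum min-cut is immediate: placing all cyclic vertices on the source side of the cut severs only the $d$ external edges to the sink terminals, so $\QMC(C_{2d},(\odd,\even),N)=N^d$, and weak duality~\eqref{eq:weak duality} gives $\QMF'(C_{2d},(\odd,\even),N)\leq N^d$ unconditionally, which is everything claimed when $N\not\equiv 3\pmod 4$.

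For the improvement to $N^d-1$, set $W=(\BC^N)^{\otimes d}$ and view the flattening $F=\Phi(T)_{\odd,\even}\colon W\to W$, whose entries by~\eqref{eq:contraction in coordinates} are $F_{(i_1 i_3\cdots i_{2d-1}),(i_2 i_4\cdots i_{2d})}=\ttrace(T^{i_{2d}}\cdots T^{i_1})$. Translation invariance of $\Phi(T)$ by two positions makes $F$ commute with the cyclic shift $\rho\in\GL(W)$ of order $d$; invariance by one position, combined with cyclicity of the trace, upgrades this to the identity $F^T=F\rho$, where the transpose is taken with respect to the standard symmetric bilinear form on $W$. Diagonalizing $\rho$ gives $W=\bigoplus_{k=0}^{d-1}W_k$ with $\rho|_{W_k}=\omega^k\Id$ for $\omega=e^{2\pi\iu/d}$, and $F=\bigoplus_{k=0}^{d-1}F_k$ with $F_k\colon W_k\to W_k$.

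The decisive block is $k=d/2$, which exists because $d$ is even. The bilinear form on $W$ pairs $W_k$ with $W_{-k}$ non-degenerately, so on the self-paired $W_{d/2}$ it restricts to a non-degenerate symmetric form. Restricting $F^T=F\rho$ to $W_{d/2}$ and using $\omega^{d/2}=-1$ yields $F_{d/2}^T=-F_{d/2}$, so $F_{d/2}$ represents a skew-symmetric form and has even rank. Combining the trivial bounds $\trank F_k\leq\dim W_k$ yields $\trank F\leq N^d$ in general, improved to $N^d-1$ whenever $\dim W_{d/2}$ is odd. Character theory gives
\[
\dim W_{d/2}=\frac{1}{d}\sum_{j=0}^{d-1}(-1)^j N^{\gcd(d,j)},
\]
and the hypothesis enters here: writing $d=2d_0$ with $d_0$ odd, regrouping by the value of $\gcd(d,j)$ recasts this as $\frac{1}{2d_0}\sum_{e\mid d_0}\phi(d_0/e)N^e(N^e-1)$. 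When $N\equiv 3\pmod 4$, each factor satisfies $N^e(N^e-1)\equiv 2\pmod 4$ (since $e$ is odd), and combined with the identity $\sum_{e\mid d_0}\phi(d_0/e)=d_0$ the numerator is $\equiv 2d_0\pmod 4$, forcing $\dim W_{d/2}$ to be odd. I expect this parity calculation to be the main technical obstacle, as one must invoke $d_0>1$ to eliminate the $N\equiv 2\pmod 4$ residue class that \emph{does} survive in the $d=2$ case of Theorem~\ref{thm:mod4}.

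For equality at $N=k^2$, I would take $V=\BC^k$, identify $A\cong V\otimes V^*$ and $B\cong V\otimes V$, and set $T^{(i,j)}=\Id_V\otimes E_{ij}\in\mathrm{End}(B)$. The matrix-product formula yields
\[
\Phi(T)_{(i_1,j_1)\cdots(i_{2d},j_{2d})}=k\prod_{\ell=1}^{2d}\delta_{j_\ell,\,i_{\ell-1}}
\]
(indices mod $2d$); separating by parity of $\ell$, these constraints uniquely determine the column multi-index from the row multi-index, exhibiting the flattening as $k$ times a permutation matrix on $[N]^d$ and hence of full rank $N^d$.
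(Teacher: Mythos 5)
Your proposal is correct and follows essentially the same route as the paper. The key identity $F^T = F\rho$ (coming from $\pi$-invariance together with cyclicity of trace) and the block-diagonalization by the $d$-cycle $\rho$ (coming from $\pi^2$-invariance) are a reformulation, in matrix language, of the paper's observation that $(A^{\ot 2d})^{\ZZ_{2d}}\subseteq S^2(V_1)\op\La2(V_{-1})\op\bigoplus_{z\neq\pm1}V_z\ot V_{\bar z}$; your $W_{d/2}$ is the paper's $V_{-1}$ and your skew-symmetry of $F_{d/2}$ is the $\La2(V_{-1})$ statement. The character computation of $\dim W_{d/2}$ matches Lemma~\ref{lem:mults}, just regrouped by divisors rather than by the substitution $2b-1\mapsto b$, and your parity argument (each $N^e(N^e-1)\equiv2\pmod4$ when $e$ is odd and $N\equiv3\pmod4$, summing against $\sum\phi(d_0/e)=d_0$) is equivalent to the paper's observation that each $\binom{N^{\gcd(2a+1,b)}}{2}$ is odd. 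The tightness construction for $N=k^2$ is the paper's Remark~\ref{rem:immrem} on $\IMM^{2d}_k$. One small point of precision: you only check that $\QMC\leq N^d$ via the cut around the whole cycle; the claimed equality $\QMC=N^d$ needs the (easy) converse min-cut check, though it is not needed for the upper bound on $\QMF'$.
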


We prove Theorems~\ref{thm:mod4} and \ref{thm:mod4higher} in \S\ref{sec:proof mod4}.

\begin{remark}
As suggested in Ref.~\onlinecite{MR3613509}, one might instead consider a weaker version of Conjecture~\ref{conjecture:weak}, where we only demand that equality holds for some (and hence infinitely many) $N>1$.
As evidence, Ref.~\onlinecite{MR3613509} proved that for all $G$, $(S,\bar S)$, as functions of $N$, we have $\QMF'(G,(S,\bar S),N)=\QMC(G,(S,\bar S),N)(1-o(1))$ (cf.~Refs.~\onlinecite{hayden2016holographic,nezami2016multipartite}, where a similar result was proved in the scenario where we place different tensors at each site).
\end{remark}


For $N=2$, our bound in Theorem~\ref{thm:mod4higher} can be improved.
This is shown by our next result, which gives an infinite sequence of graphs with constant bond dimension for which the quantum max-flow is strictly smaller than the quantum min-cut.

\begin{theorem}\label{thm:2d}
  For all $d\geq2$,
  \begin{align*}
    \QMF'(C_{2d},(\odd,\even),2) \leq \frac34 2^d < 2^d = \QMC(C_{2d},(\odd,\even), 2).
  \end{align*}
\end{theorem}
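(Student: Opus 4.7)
The plan is to exhibit at least $2^{d-2}$ linearly independent vectors in the kernel of the flattening $M := \Phi(T)_{S,\bar S}$ for every~$T$; since the domain has dimension $2^d$, this gives $\trank M \leq 2^d - 2^{d-2} = \tfrac34\,2^d$.

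I would begin by revisiting the base case $d=2$, which is the $N=2$ case of Theorem~\ref{thm:mod4}. A direct verification shows that $\alpha := e_1 \otimes e_2 - e_2 \otimes e_1 \in (\BC^2)^{\otimes 2}$ lies in the kernel of $\Phi(T)_{S,\bar S}$ for every~$T$: $\alpha$ records the equality of the columns indexed by $(i_1,i_3)=(1,2)$ and $(2,1)$, and these columns agree because the length-$4$ words $T^{i_4}T^2T^{i_2}T^1$ and $T^{i_4}T^1T^{i_2}T^2$ are cyclically equivalent for every $i_2,i_4\in\{1,2\}$. This matching is a low-dimensional coincidence which breaks as soon as the word is lengthened or the alphabet enlarged, so the naive ansatz $u\otimes\alpha$ fails to produce kernel vectors for $d>2$.

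Instead I would factor the flattening as $M = A B^\top$, where $A$ has rows indexed by $\vec i_\odd$ and encodes $\bigotimes_{k=1}^d T^{i_{2k-1}}$, while $B$ has rows indexed by $\vec i_\even$ and encodes $\bigotimes_{k=1}^d T^{i_{2k}}$. Both $A$ and $B$ factor through $V^{\otimes d} \subseteq \mathrm{Mat}_2^{\otimes d}$, where $V := \tspan(T^1,T^2)$ is a two-dimensional subspace; for generic~$T$ both are isomorphisms onto $V^{\otimes d}$. Thus $\trank M$ equals the rank of a specific $2^d \times 2^d$ bilinear form on $V^{\otimes d}$ built from the cyclic trace of the interleaved product, and the problem reduces to bounding the rank of a form attached to this fixed two-dimensional subspace of $\mathrm{Mat}_2$.

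I would then use the $\BZ_d$-symmetry of $M$—coming from the $\BZ_{2d}$-translation invariance of $\Phi(T)$ restricted to shifts preserving the odd/even partition—to decompose the bilinear form into isotypic blocks, and apply the $d=2$ identity locally at each consecutive pair of cycle positions to produce candidate kernel vectors in each block. The main obstacle will be the final dimension count: verifying that, after accounting for the cyclic redundancy between these local identities, the resulting subspace of the kernel has dimension at least~$2^{d-2}$. This combinatorial count, which interweaves the length-$4$ trace identity for $2\times 2$ matrices with the $\BZ_d$-orbit structure on $\{1,2\}^d$, will be the delicate step of the argument.
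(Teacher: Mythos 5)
Your overall plan---exhibit at least $2^{d-2}$ independent kernel vectors of the flattening---matches the shape of the paper's argument, and the target count $2^{d-2}$ is correct, but the concrete route you propose has serious problems.

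First, the explanation you give for the $d=2$ base case is wrong. The words $T^{i_4}T^2T^{i_2}T^1$ and $T^{i_4}T^1T^{i_2}T^2$ are \emph{not} cyclically equivalent for general $i_2,i_4$ (the cyclic rotations of the first are $T^2T^{i_2}T^1T^{i_4}$, $T^{i_2}T^1T^{i_4}T^2$, $T^1T^{i_4}T^2T^{i_2}$, none of which is the second). The actual reason $\alpha^0\ot\alpha^1-\alpha^1\ot\alpha^0$ kills the flattening when $N=n=2$ is the one used in the proof of Theorem~\ref{thm:mod4}: the $\ZZ_4$-symmetry forces the flattening to be block diagonal with one block a skew-symmetric form on $\La2A$, which is $1\times 1$ and hence zero when $\dim A = 2$. (This is a $2\times2$ trace identity, not a cyclic identity.) Because the genuine mechanism is the $\ZZ_4$-invariance of the full length-$4$ cycle and not a pairwise cyclic coincidence, your idea of ``applying the $d=2$ identity locally at each consecutive pair of cycle positions'' does not have an obvious meaning for $d>2$: there is no local length-$4$ subcycle inside $C_{2d}$.

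Second, and more importantly, the step you flag as ``delicate'' is exactly the content of the theorem and you have not addressed it. The paper sidesteps the ``for every $T$'' difficulty entirely: by $\GL(A)$-invariance of the rank and the identification of $\BP\overline{\TNS}(C_{2d},2,2)$ with the closure of $\PGL(A)\cdot\Phi(L[\mu,\nu])$ for a certain pencil $L[\mu,\nu]$, it suffices to bound the rank at a single generic point of that rational curve. For that explicit tensor, kernel vectors $K_S$ are written down for every nonempty even-cardinality $S\subseteq\odd$, membership in the kernel is verified by a sign-reversing involution on the index set, and linear independence of the $K_S$ with $1\in S$ is proved by an inductive direct-sum decomposition. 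Your proposal replaces all of this with a $\ZZ_d$-isotypic decomposition of a bilinear form on $V^{\ot d}$ plus an unspecified combinatorial count; this is a different and much less concrete route, and as written it is not a proof. If you want to pursue your factorization $M=AB^\top$ through $V^{\ot d}$, you still need to (i) identify an explicit $2^{d-2}$-dimensional subspace of the kernel, and (ii) prove it has that dimension, and it is not clear that the $\ZZ_d$-block structure alone produces enough defect, since the analogous $\ZZ_{2d}$-symmetry argument in Theorems~\ref{thm:mod4} and \ref{thm:mod4higher} only yields a defect of $1$.

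Finally, a smaller point: your goal of finding kernel vectors valid \emph{for every} $T$ is stronger than necessary and possibly false as a statement about a fixed subspace; lower semicontinuity of rank plus a bound at a generic $T$ already gives the bound for all $T$, which is what the theorem asserts.
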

Theorem~\ref{thm:2d} puts the observation in Ref.~\onlinecite{MR3513725} that $\QMF'(C_6,(\odd,\even),2)=6<8$ into a general context.
We prove Theorem~\ref{thm:2d} in \S\ref{sec:Nn2}.
Interestingly, the rank defect is of a \emph{nonlinear} origin, unlike in our preceding theorems.
Numerical calculations up to~$d=10$ suggest that our bound is tight.

\begin{remark}\label{rem:immrem}
If $N=n=k^2$ for some integer $k$, then $\overline{\TNS}(C_m, k^2, k^2)$ consists of the diagonal degenerations of the $m$-times iterated $k\times k$-matrix multiplication tensor
\[
\IMM^m_k:=\sum_{i_p =1}^k (a_1)^{i_1}_{i_2} \otimes (a_2)^{i_2}_{i_3} \otimes \cdots \otimes (a_{m-1})^{i_{m-1}}_{i_m}\otimes (a_m)^{i_{m}}_{i_1},
\]
where $\{(a_p)^i_j\}_{ij = 1 \vvirg k}$ is a basis of the $p$-th copy of $A$. The variety $TNS(C_m,k^2,k^2)$ is the closure of the image via $\Phi$ of the $GL(A)$-orbit of the tensor $T = \sum a^i_j \otimes \beta^j_k \otimes b^k_i$, where $\{ a^i_j \}_{ij = 1 \vvirg k}$ is a basis of $A$ and $\{  b^i_j \}_{ij = 1 \vvirg k}$ is a basis of $B$ with dual basis $\{ \beta^j_i \}_{ij = 1 \vvirg k}$. In fact, $T = \IMM^3_k$ regarded as an element of $A \otimes B^* \otimes B$ and $\Phi(T) = \IMM^m_k$.

The ranks of the flattenings of $\IMM^m_k$ are known (see, e.g., Refs.~\onlinecite{MR3513064,BuhChrZui:NondetQuantumComCompl}):
when $m=2d$ and $S=\{1\hd d\}$, $\bar S=\{d+1\hd 2d\}$, the quantum min-cut is $k^4$ and indeed the flattening is of maximal rank~$k^4$.
Similarly, in the case that $(S,\bar S)=(\odd,\even)$, the quantum min-cut as well as the rank of the flattening are equal to~$k^{2d}$.
Thus:
\begin{align*}
  \QMF'(C_{2d},(\odd,\even),k^2)=k^{2d}=\QMC(C_{2d},(\odd,\even),k^2)
\end{align*}
for all~$k$.
This shows the equality statements in Theorems~\ref{thm:mod4} and \ref{thm:mod4higher}.
\end{remark}

\subsection{Notation and conventions}
$A$, $B$ are complex vector spaces respectively of dimensions $N,n$.
$\GL(A)$ denotes the group of invertible linear maps $A\ra A$ and $\FS_d$ denotes the permutation group on $d$ elements.
We denote the elements of $\ZZ_d$, the cyclic group of order $d$, by $[k]$ for $k\in\ZZ$.
We write $S^mA$ and $\La mA$ for the symmetric and antisymmetric subspaces of $A^{\ot m}$, respectively.

\section{Proof of Theorems~\ref{thm:mod4} and \ref{thm:mod4higher}}\label{sec:proof mod4}

We already showed in Remark~\ref{rem:immrem} that we have equality in all square dimensions~$N=k^2$, so we only need to establish the upper bound.
We first prove Theorem~\ref{thm:mod4}, which rigorously establish the defects observed in Ref.~\onlinecite{MR3613509} and Theorem~\ref{thm:mod4higher}.

\subsection{Proof of Theorem~\ref{thm:mod4}}

It is clear from the cyclic symmetry of the tensor network in Figure~\ref{fig:mod4},~(b) that $\Phi(T)\in A^{\ot 4}$ has a cyclic $\ZZ_4$-symmetry, generated by $\pi=(1~2~3~4)$.
Thus we need to understand the invariant subspace $(A^{\ot4})^{\ZZ_4}$.

For this, we order the tensor factors of $A^{\ot4}$ as $(A_1\ot A_3)\ot(A_2\ot A_4)$, corresponding to the flattening~$\Phi(T)_{\odd,\even}$ of interest.
We claim that
\begin{align}\label{eq:mod4 symmetries}
  (A^{\ot4})^{\ZZ_4}
\subseteq S^2(S^2A) \op \La2(\La2A)
\subseteq S^2A\ot S^2A \op \La2A\ot\La2A
\subseteq (A_1\ot A_3)\ot(A_2\ot A_4).
\end{align}
(In fact, the first inclusion is an equality.)
To see this, recall that the $\ZZ_4$-symmetry is generated by~$\pi=(1~2~3~4)$.
Since $\pi^2=(1~3)(2~4)$, it is clear that $(A^{\ot4})^{\ZZ_4} \subseteq S^2A\ot S^2A \op \La2A\ot\La2A$.
Now note that $\pi$ acts block diagonally with respect to the direct sum.
In fact, $\pi=(1 3)\tau$, where $\tau=(1 2)(3 4)$, so $\pi$ and $\tau$ have the same action on $S^2A\ot S^2A$, while $\pi$ acts by $-\tau$ on $\La2A\ot\La2A$.
But $\tau$ interchanges $A_1\ot A_3$ with $A_2\ot A_4$, so it follows that the $\ZZ_4$-invariant subspace lives in $S^2(S^2A) \op \La2(\La2A)$.
This establishes~\eqref{eq:mod4 symmetries}.

Now, \eqref{eq:mod4 symmetries} implies that the linear map~$\Phi(T)_{\odd,\even}$ is block diagonal, mapping the symmetric subspace of $A_1^*\ot A_3^*$ to the symmetric subspace of $A_2\ot A_4$, and the anti-symmetric subspace to the anti-symmetric subspace.
Moreover, the first block is given by a symmetric matrix, while the second block is given by a skew-symmetric matrix.
Since the rank of a skew-symmetric matrix is always even, while $\dim\La2A=\frac12N(N-1)$ is even if and only if $N\equiv0,1\pmod4$, we find that
\begin{equation}\label{eq:upper bound}
  \QMF'(C_4,(\odd,\even),N)
\leq \begin{cases}
    N^2 & \text{if $N \equiv 0,1 \pmod 4$,} \\
    N^2 - 1 & \text{if $N \equiv 2,3 \pmod 4$.}
  \end{cases}
\end{equation}
(The $N=2$ case will be re-proved geometrically in \S\ref{sec:Nn2}. It also follows by a direct computation, as was done
in Ref.~\onlinecite{MR3513725}.)\qed

\subsection{Proof of Theorem~\ref{thm:mod4higher}}
Now consider a general graph~$C_{2d}$ with $d\equiv2\pmod4$.
Again, $\Phi(T)\in A^{\ot2d}$ has a $\ZZ_{2d}$-symmetry, generated by $\pi=(1~2~3~\cdots~2d)$, so we focus on the $\ZZ_{2d}$-invariant subspace of $A^{\ot 2d}$.

We first note that $\pi^2=(1~3~\cdots~2d-1)(2~4~\cdots~2d)$.
Both $(1~3~\cdots~2d-1)$ and $(2~4~\cdots~2d)$ are $d$-cycles, permuting the odd and even subsystems, respectively.
We can decompose $A^{\ot d} = \bigoplus_z V_z$ into the eigenspaces of such a $d$-cycle, where $z$ runs over the $d$-th roots of unity.
Thus the invariance by $\pi^2$ implies that
\begin{align*}
  (A^{\ot 2d})^{\ZZ_{2d}} \subseteq \bigoplus_z V_z \ot V_{\bar z},
\end{align*}
where as before we use the odd-even ordering of tensor factors.
Since $d$ is even, $z=-1$ is a possible eigenvalue.
Next, note that $\pi=(1~3~\cdots~2d-1)\tau=\tau(2~4~\cdots~2d)$, where $\tau=(1 2)(3 4)\cdots (2d\!\!-\!\!1~2d)$ interchanges the odd and even subsystems.
It follows that $\pi$ acts by $\tau$ on $V_1\ot V_1$, by $-\tau$ on $V_{-1}\ot V_{-1}$, and by $\begin{psmallmatrix}0&\bar z\tau\\z\tau&0\end{psmallmatrix}$ on $V_z\ot V_{\bar z}\op V_{\bar z}\ot V_z$.
In particular,
\begin{align}\label{eq:mod2d symmetries}
  (A^{\ot 2d})^{\ZZ_{2d}} \subseteq S^2(V_1) \op \La2(V_{-1}) \op \bigoplus_{z\neq\pm1} V_z \ot V_{\bar z}.
\end{align}
(For $d=2$, the only eigenspaces are $V_1=S^2A$ and $V_{-1}=\La2A$, so~\eqref{eq:mod2d symmetries} reduces to~\eqref{eq:mod4 symmetries}.)

As before, \eqref{eq:mod2d symmetries} implies that $\Phi(T)_{\odd,\even}$ is block diagonal (e.g., with respect to the three direct summands) and the block that maps $V_{-1}^*$ to $V_{-1}$ is given by a skew-symmetric matrix.
We now compute the dimension of the eigenspace~$V_{-1}$.

\begin{lemma}\label{lem:mults}
  Let $d=2(2a+1)$.
  Then, $\dim V_{-1} = \frac1{2a+1} \sum_{b=1}^{2a+1} \binom {N^{\gcd(2a+1,b)}} 2$.
  In particular, $\dim V_{-1}$ is odd if $N\equiv3\pmod4$.
\end{lemma}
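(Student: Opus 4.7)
The plan is to compute $\dim V_{-1}$ by character theory for the cyclic group $\ZZ_d = \langle \sigma \rangle$ acting on $A^{\otimes d}$, where $\sigma$ is the $d$-cycle $(1\,3\,\cdots\,2d-1)$ (or its twin on the even factors; the analysis is identical), and then extract the parity under $N \equiv 3 \pmod 4$. The space $V_{-1}$ is the isotypic component of the one-dimensional character $\sigma^k \mapsto (-1)^k$, so orthogonality of characters gives
$$\dim V_{-1} = \frac{1}{d}\sum_{k=0}^{d-1} (-1)^k \ttrace_{A^{\otimes d}}(\sigma^k).$$
The essential input is the standard computation that $\sigma^k$ decomposes the $d$ tensor factors into $\gcd(d,k)$ cycles of equal length, each contributing a factor of $N$ to the trace, hence $\ttrace(\sigma^k) = N^{\gcd(d,k)}$.

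Next I would substitute $d = 2c$ with $c = 2a+1$ odd, and split the sum by parity of $k$. For even $k=2m$ with $0 \le m < c$, one has $\gcd(2c,2m) = 2\gcd(c,m)$ with sign $+1$; for odd $k=2m+1$, one has $\gcd(2c,2m+1) = \gcd(c,2m+1)$ with sign $-1$, because $2m+1$ is coprime to $2$. The key number-theoretic observation is that since $c$ is odd, the map $m \mapsto 2m+1 \bmod c$ is a bijection of $\{0,\dots,c-1\}$, so the odd-$k$ contribution reindexes as $\sum_{j=0}^{c-1} N^{\gcd(c,j)}$ (with the convention $\gcd(c,0)=c$). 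Combining the two sums term-by-term and using $N^{2g}-N^g = 2\binom{N^g}{2}$ collapses the expression to
$$\dim V_{-1} = \frac{1}{c}\sum_{b=1}^{c} \binom{N^{\gcd(c,b)}}{2},$$
where the $b=c$ summand absorbs the pair $(m=0,\,j=0)$ via $\gcd(c,c)=c$. This gives the stated formula.

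For the parity assertion, assume $N \equiv 3 \pmod 4$. Then $N^g$ is always odd, so $\binom{N^g}{2} = N^g \cdot (N^g-1)/2$ has the same parity as $(N^g-1)/2$. Since $N \equiv -1 \pmod 4$, we have $N^g \equiv (-1)^g \pmod 4$, so $(N^g-1)/2$ is odd iff $g$ is odd. Every $\gcd(c,b)$ divides the odd number $c$ and is therefore odd, so each of the $c$ summands is odd. Their sum is odd (since $c$ is odd), and an odd integer divided by the odd integer $c$ must yield an odd quotient, so $\dim V_{-1}$ is odd.

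I expect the most delicate point to be the reindexing via the bijection $m \mapsto 2m+1 \bmod c$ together with the careful handling of the $\gcd(c,0) = c$ boundary term, as this is what allows the two sums to be matched and the binomial-coefficient closed form to emerge; everything else is routine character theory and mod-$4$ arithmetic.
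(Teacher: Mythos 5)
Your proof is correct and follows essentially the same route as the paper's: character theory on $\ZZ_d$ with $\chi([k]) = N^{\gcd(d,k)}$, splitting by parity of $k$, the gcd identities $\gcd(2c,2m)=2\gcd(c,m)$ and $\gcd(2c,2m+1)=\gcd(c,2m+1)$, reindexing the odd-$k$ sum via the bijection $m\mapsto 2m+1\bmod c$ (invertibility of $2$ mod the odd $c$), and the mod-$4$ parity analysis of $\binom{N^g}{2}$. You are somewhat more explicit than the paper in the final step, noting that an odd sum divided by the odd $c$ is odd, which the paper leaves implicit behind ``it suffices to show each binomial coefficient is odd.''
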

\begin{proof}
Consider the representation of $\ZZ_d$ on $V=(\CC^N)^{\ot d}$ by cyclically shifting tensor factors.
It is straightforward to evaluate its character~$\chi\colon\ZZ_d\to\CC$ in the standard product basis, $a_{i_{[1]}}\otc a_{i_{[d]}}$, where we label the indices by $\ZZ_d$ rather than $\{1,\dots,d\}$ so that it is straightforward to implement the shift:
For all $k\in \{1\hd d\}$,
\begin{align*}
  \chi([k])
= \sum_{i_{[1]},\dots,i_{[d]}\in\{1,\dots,N\}} \prod_{j=1}^d \delta_{i_{[j]}, i_{[j+k]}}
= N^{\lvert\ZZ_d/k\ZZ_d\rvert}
= N^{\gcd(d, k)},
\end{align*}
since we have one free index to choose per orbit of the shift by~$[k]$.
The dimension of the $-1$~eigenspace is the multiplicity of the sign representation in $V$, so given by the normalized inner product
\begin{equation}\label{eq:goal}
  \dim V_{-1} = \frac 1d \sum_{k=1}^d (-1)^k \chi([k]) = \frac1d \sum_{k=1}^d (-1)^k N^{\gcd(d,k)}.
\end{equation}
Using $d=2(2a+1)$, we can calculate the numerator as
\begin{align*}
  \sum_{k=1}^d (-1)^k N^{\gcd(d,k)}
= \sum_{b=1}^{2a+1} N^{\gcd(d,2b)} - N^{\gcd(d,2b-1)}
= \sum_{b=1}^{2a+1} N^{2\gcd(2a+1,b)} - N^{\gcd(2a+1,2b-1)} \\
= \sum_{b=1}^{2a+1} N^{2\gcd(2a+1,b)} - N^{\gcd(2a+1,b)}
= \sum_{b=1}^{2a+1} N^{\gcd(2a+1,b)} \left( N^{\gcd(2a+1,b)} - 1 \right).
\end{align*}
In the third step, we substituted $2b-1$ by $b$ (since this defines a bijection of $\ZZ_{2a+1}$ and $\gcd(2a+1,b)$ only depends on $b$ modulo $2a+1$, the sum is left unchanged).
Thus the multiplicity~\eqref{eq:goal} is given by
\begin{align*}
  \dim V_{-1}
= \frac1d \sum_{b=1}^{2a+1} N^{\gcd(2a+1,b)} \left( N^{\gcd(2a+1,b)} - 1 \right)
= \frac1{2a+1} \sum_{b=1}^{2a+1} \binom {N^{\gcd(2a+1,b)}} 2
\end{align*}
This establishes the first claim.
For the second, assume that $N\equiv3\pmod4$.
In order to prove that $\dim V_{-1}$ is odd, it suffices to show that each binomial coefficient is odd.
But indeed, since $2a+1$ is odd, so is $\gcd(2a+1,b)$.
This implies that $N^{\gcd(2a+1,b)}\equiv 3\pmod4$ for all $b$, which precisely ensures that the binomial coefficients are odd.
\end{proof}

As before, Lemma~\ref{lem:mults} implies that $\Phi(T)_{\odd,\even}$ has a rank defect if $N\equiv3\pmod4$; this establishes Theorem~\ref{thm:mod4higher}. \qed

\begin{remark}
Numerical experiments suggest that there is a rank defect for all even $d$, not just for $d\equiv2\pmod4$.
\end{remark}

\section{\texorpdfstring{The variety of tensor network states $\BP\overline{\TNS}(C_m,N,n)$}{The variety of tensor network states PTNS(C\_m,N,e)}}\label{sec:geometry}
In this section we discuss some general features of the variety~$\BP\overline{\TNS}(C_m,N,n)$.
In particular, we show that, for $N,n\geq m$, the smallest linear subspace containing the variety is~$(A^{\ot m})^{\ZZ_m}$.

Let $v_m(\BP A):=\{ [p] \mid p=\ell^{\otimes m} \text{ for some } 0\neq\ell\in A\}\subseteq\BP(S^mA)$ denote the {\it Veronese variety} of $m$-th powers of linear forms.
For any variety~$X\subseteq \BP V$, let
\begin{align*}
\s_r(X):=\overline{\bigcup_{x_1\hd x_r\in X}\tspan\{x_1\hd x_r\}}
\end{align*}
denote the $r$-th {\it secant variety} of $X$, so $\s_r(v_m(\BP A))\subseteq \BP(S^mA)$ is the Zariski closure of the set of homogeneous polynomials of degree~$m$ in $N$~variables that may be written as the sum of $r$ $m$-th powers of linear forms.

Observe that $\BP\overline{\TNS}(C_m,N,n)$ contains the variety $\sigma_n(v_m(\BP A))\subseteq\BP(S^mA)$.
Indeed, consider the tensor
\begin{align*}
T=a_1\ot b_1\ot\b^1+\dots+a_n\ot b_n\ot\b^n,
\end{align*}
where $b_1\hd b_n$ is a basis of $B$, with dual basis $\b^1\hd \b^n$.
Then~\eqref{eq:contraction in coordinates} implies
\begin{align*}
\Phi(T) = a_1^{\ot m}+\dots+a_n^{\ot m}.
\end{align*}
If the~$a_i$ are chosen as general points of $A$, this projectivizes to a general point of $\s_n(v_m(\BP A))$, and by $\GL(A)$-invariance of the image, the whole variety must be contained in $\BP\overline{\TNS}(C_m,N,n)$.

Not every $\Phi(T)$ is contained in $S^mA$ (outside of the trivial case $n=1$ which we exclude from consideration).
E.g., for $m=2d$, consider the tensor
\begin{align*}
  T=a_1\ot b_1\ot\b^2 + a_2\ot b_2\ot\b^1.
\end{align*}
Then, with respect to the odd-even ordering of tensor factors,
\begin{align*}
  \Phi(T) = a_1^{\ot d} \ot a_2^{\ot d} + a_2^{\ot d} \ot a_1^{\ot d}
\end{align*}
which is not in~$S^m A$.

Now assume that $N,n\geq m$ and consider the tensor
\begin{align*}
  T = a_1\ot b_1\ot\b^m+a_2\ot b_2\ot\b^1+\dots+a_{m-1}\ot b_{m-1}\ot\b^{m-2}+a_m\ot b_m\ot\b^{m-1}.
\end{align*}
Then $\Phi(T)$ is the sum of the terms in the $\ZZ_m$-orbit of $a_1\ot a_2\otc a_{m-1}\ot a_m$.
But such vectors span $(A^{\ot m})^{\BZ_m}$.
In summary:

\begin{proposition}
If $N,n\geq m$, then $\tspan\left(\BP\overline{\TNS}(C_m,N,n)\right)=(A^{\ot m})^{\BZ_m}$.
\end{proposition}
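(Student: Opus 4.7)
The containment $\tspan(\BP\overline{\TNS}(C_m,N,n)) \subseteq (A^{\ot m})^{\BZ_m}$ is built into the setup: since we place the same tensor $T$ at each vertex of $C_m$, the contraction $\Phi(T)$ is cyclically symmetric, as already noted in \S\ref{mpsdef}. So the plan is to establish the reverse containment by exhibiting a family of tensor network states $\Phi(T)$ whose linear span fills out the entire invariant subspace.

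The key observation is that $(A^{\ot m})^{\BZ_m}$ is itself spanned by $\BZ_m$-orbit sums of decomposable tensors, i.e.\ by vectors of the form $\sum_{k=0}^{m-1} \pi^k(v_1 \otc v_m)$ with $v_i \in A$, as one sees by symmetrizing a product basis of $A^{\ot m}$. It therefore suffices to realize every such orbit sum as a tensor network state. I would take the tensor already introduced in the excerpt, namely $T = \sum_{i=1}^m a_i \ot b_i \ot \b^{i-1}$ with indices read mod $m$, and verify via~\eqref{eq:contraction in coordinates} that $\Phi(T) = \sum_{k=0}^{m-1} \pi^k(a_1 \otc a_m)$. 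Concretely, each slice $T^{i_j}$ is an elementary matrix with a single nonzero entry, so $\ttrace(T^{i_m} \cdots T^{i_1})$ vanishes unless $(i_1,\dots,i_m)$ is a cyclic shift of $(1,2,\dots,m)$, and reading off the surviving terms yields precisely the claimed orbit sum.

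The hypotheses on the bond dimensions enter transparently in this construction: $n \geq m$ is exactly what allows $b_1,\dots,b_m$ together with their duals $\b^1,\dots,\b^m$ to sit inside a basis of $B$ and~$B^*$, while the $a_i$ may be chosen as arbitrary vectors in $A$ (the hypothesis $N \geq m$ making every linearly independent $m$-tuple available). As we vary the $a_i$, the resulting tensors $\Phi(T)$ sweep out orbit sums of every decomposable tensor, and their linear combinations supply the reverse containment. I do not anticipate a genuine obstacle: the proposition essentially repackages the explicit construction exhibited in the paragraph immediately preceding its statement, and the proof reduces to recording the trace computation and invoking the spanning property of orbit sums.
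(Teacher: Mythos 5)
Your proposal is correct and follows essentially the same route as the paper: the tensor $T = \sum_i a_i \ot b_i \ot \b^{i-1}$ you write is exactly the one the paper uses (with $\b^m$ standing in for $\b^0$), and the paper likewise concludes by observing that $\Phi(T)$ is the $\ZZ_m$-orbit sum of $a_1 \otc a_m$ and that such orbit sums span the invariant subspace. Your version merely spells out the trace computation and the spanning argument that the paper leaves implicit; the only small quibble is that $N\geq m$ is not actually needed for the spanning argument (arbitrary, possibly repeated, $a_i$ suffice), whereas $n \geq m$ is the essential hypothesis, which you do identify correctly.
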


It follows that rank violations that are not explained by the $\ZZ_m$-symmetry must be of a nonlinear origin.
We will see such a phenomenon in the next section.

\section{Proof of Theorem~\ref{thm:2d}}\label{sec:Nn2}

In this section, we give a geometric construction of the relevant tensor network variety and prove Theorem~\ref{thm:2d}.

\subsection{\texorpdfstring{A geometric construction of $\BP\overline{\TNS}(C_{2d},2,2)$}{A geometric construction of PTNS(C\_2d,2,2)}}
Let $A=B=\bbC^2$, let~$a_0$, $a_1$ be a basis of~$A$ with dual basis~$\alpha^0$, $\alpha^1$ of~$A^*$, and $b_0$, $b_1$ a basis of~$B$ with dual basis~$\beta^0$, $\beta^1$ of~$B^*$.

Consider the image under $\Phi$ of the line
\begin{align*}
  L[\mu,\nu]:=\left[\mu(a_0\ot b_0\ot\b^0+a_1\ot b_1\ot \b^1)+\nu(a_0\ot b_0\ot\b^1+a_1\ot b_1\ot\b^0)\right] \subseteq \BP(A\ot B\ot B^*)=\pp7.
\end{align*}
In \S\ref{sec:geometry} we discussed the special cases~$\nu=0$ and $\mu=0$. When $\nu=0$, we obtain the point $[a_0^{\ot 2d}+a_1^{\ot 2d}]$, which is a general point of the variety $\s_2(v_{2d}(\BP A)) \subseteq \BP(S^{2d}A)$, which has dimension~$3$. When $\mu=0$, we obtain $[a_0^{\ot d} \ot a_1^{\ot d}+a_1^{\ot d} \ot a_0^{\ot d}] \not\in\BP(S^{2d}A)$. Consider the closure of the union of the $\PGL(A)$-orbits of the points of $\Phi(L[\mu,\nu])$, that is $\bar{\PGL(A) \cdot \Phi(L[\mu,\nu])}$: this is an irreducible projective variety of dimension $4$ sitting inside $\BP \overline{\TNS}(C_{2d},2,2)$ because $\dim\PGL(A)=3$ and the stabilizer of $[a_0^{\ot 2d}+a_1^{\ot 2d}]$ in $\PGL(A)$ is finite.

But $\Phi$ is a rational map from $\BP(A\ot B\ot B^*)=\pp7$ whose fibers have dimension at least $\dim\PGL(B)=3$.
Thus the Zariski closure of its image $\BP\overline{\TNS}(C_{2d},2,2)$ is an irreducible variety of dimension at most $4$, so $\BP \overline{\TNS}(C_{2d},2,2)=\overline{\PGL(A)\cdot \Phi(L[\mu,\nu])}$.
We summarize:

\begin{proposition}
  The variety $\BP\overline{\TNS}(C_{2d},2,2)$ coincides with the variety $\overline{\PGL(A) \cdot \Phi(L[\mu,\nu])}$.
\end{proposition}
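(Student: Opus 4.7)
The plan is to prove the equality by comparing dimensions of two irreducible projective varieties, one contained in the other. The containment $\overline{\PGL(A)\cdot\Phi(L[\mu,\nu])} \subseteq \BP\overline{\TNS}(C_{2d},2,2)$ is automatic: the group $\PGL(A)$ acts on $\BP(A\ot B\ot B^*)$ through its action on the first factor, this action commutes with the contraction along the $B$, $B^*$ factors, and hence $\Phi$ is $\PGL(A)$-equivariant. Moreover $L[\mu,\nu]$ lies in the domain of $\Phi$, so its $\PGL(A)$-orbit gets sent to the target variety. The job is therefore to show that both sides have the same dimension, namely $4$, and to invoke irreducibility.

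For the upper bound, I would package the discussion preceding the proposition as follows. The rational map $\Phi\colon \BP(A\ot B\ot B^*) = \pp7 \dashrightarrow \BP(A^{\ot 2d})$ is $\PGL(B)$-invariant, because the $\PGL(B)$-action rescales $b_s \mapsto g b_s$ and $\b^t \mapsto (g^{-1})^*\b^t$ which cancels out inside each contraction. Since $\dim \PGL(B) = 3$ and the $\PGL(B)$-action on the generic point of $\pp7$ is free modulo finite stabilizers, the generic fiber of $\Phi$ has dimension at least $3$. Hence $\dim \BP\overline{\TNS}(C_{2d},2,2) \leq 7 - 3 = 4$. The target is irreducible, as the closure of the image of an irreducible variety.

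For the lower bound, I would verify that $\overline{\PGL(A)\cdot \Phi(L[\mu,\nu])}$ already has dimension $4$. At $[\mu:\nu]=[1:0]$ we get the point $[a_0^{\ot 2d}+a_1^{\ot 2d}]$, a general point of $\sigma_2(v_{2d}(\BP A))$; its stabilizer in $\PGL(A)=\PGL_2$ is finite (for $2d\geq 2$, a general binary form of even degree that splits as two $2d$-th powers has trivial connected stabilizer), so its $\PGL(A)$-orbit is $3$-dimensional. Now $[\mu:\nu]$ provides an additional independent direction: at $[1:0]$ the image lies in $\BP(S^{2d}A)$, while at $[0:1]$ the image $[a_0^{\ot d}\ot a_1^{\ot d}+a_1^{\ot d}\ot a_0^{\ot d}]$ does not, so $\Phi(L[\mu,\nu])$ is not swept out by the $\PGL(A)$-orbit of a single point, yielding dimension at least $3+1=4$. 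Combining with the upper bound, the inclusion between two irreducible varieties of the same dimension $4$ is an equality.

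The main obstacle is the transversality claim in the lower bound, i.e., that the parameter $[\mu:\nu]$ genuinely contributes an extra dimension beyond the $\PGL(A)$-orbit of $[a_0^{\ot 2d}+a_1^{\ot 2d}]$. This is handled cleanly by the $S^{2d}A$ vs.\ non-$S^{2d}A$ dichotomy observed in \S\ref{sec:geometry} and recalled just before the proposition; everything else is standard dimension bookkeeping.
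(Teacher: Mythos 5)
Your proof is correct and takes essentially the same route as the paper: upper bound $\dim\leq 7-3=4$ from the $\PGL(B)$-fibers, lower bound $\dim\geq 4$ from the $3$-dimensional $\PGL(A)$-orbit of $[a_0^{\ot 2d}+a_1^{\ot 2d}]$ plus the extra direction from $[\mu:\nu]$ (detected by the $\BP(S^{2d}A)$ vs.\ non-$\BP(S^{2d}A)$ dichotomy), then irreducibility. The only difference is that you make the lower-bound transversality argument explicit, where the paper states the dimension count more tersely.
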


We adopt the following notation: $\udelta$ denotes a $2d$-tuple of elements in~$\{0,1\}$, and $\ueta$ and $\ueps$ denote $d$-tuples of elements in $\{0,1\}$. We use odd indices for the entries of $\ueta$ and even indices for the entries of $\ueps$. We write $\ueta \smile \ueps$ for the $2d$-tuple obtained by interlacing $\ueta$ and $\ueps$. Thus, if $\ueta = (\eta_1,\eta_3 \vvirg \eta_{2d-1})$ and $\ueps = (\eps_2,\eps_4 \vvirg \eps_{2d})$ then~$\udelta = \ueta \smile \ueps = (\eta_1 , \eps_2,\eta_3 \vvirg \eps_{2d})$. We write $a_{\udelta}$ etc.\ for the corresponding basis vectors. For example, the contraction of $a_{\ueta \smile \ueps}$ by $\alpha^\ueta$ is $a_\ueps$. All indices are to be read modulo~$2d$.

For every $\udelta \in \{ 0 ,1\}^{2d}$, define
\begin{align}\label{eq:def coeff}
\coef(\udelta) = \sum_{i=1}^{2d} \delta_{i} \boxplus \delta_{i+1},
\end{align}
where $\boxplus$ denotes the XOR operation, i.e., $\delta_{i} \boxplus \delta_{i+1}=1$ if $\delta_{i}\neq  \delta_{i+1}$ and $\delta_{i} \boxplus \delta_{i+1}=0$ otherwise. The value $\coef(\udelta)$ counts the number of changes between $0$ and $1$ that one observes cyclically reading $\udelta$.
Directly from~\eqref{eq:contraction in coordinates}, we observe that $\coef(\udelta)$ determines the coefficient of $\udelta$ in $\Phi(L(\mu,\nu))$.
Explicitly,
\begin{align}\label{eq:formula for F}
  \Phi(L(\mu, \nu))
= \sum_{\udelta} \mu^{2d - \coef(\udelta)} \nu^{\coef(\udelta)} \, a_\udelta
= \sum_c \biggl[\mu^{2d-c}\nu^c \sum_{\coef(\udelta)=c} a_\udelta\biggr].
\end{align}
Moreover $\coef(\udelta)$ is always even. See Figure~\ref{fig:ladder} for a schematic representation of the contribution of $\mu$ or $\nu$ in the contraction:

\begin{figure}[htp!]
\begin{center}
\includegraphics{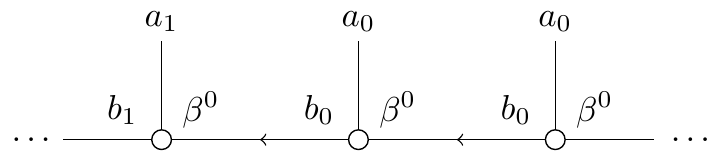}
\end{center}
\caption{This segment produces the monomial $\nu\mu\mu$.}\label{fig:ladder}
\end{figure}

\begin{remark}
The image of $L[\mu,\nu]$ under $\Phi$ is a rational normal curve of degree $d$, as opposed to the na\"\i ve $2d$.
Indeed, since only even powers of $\mu$ and $\nu$ appear, one can reparametrize the image setting $\mu'=\mu^2,\nu'=\nu^2$, showing that it is a rational curve of degree at most $d$.
Moreover, observe that for every even $c = 0,2 \vvirg 2d$ there exists a $d$-tuple $\udelta$ such that $\coef(\udelta) = c$ and the corresponding $a_\udelta$'s are linearly independent.
This shows that $\Phi(L[\mu,\nu])$ spans a $d$-dimensional subspace of $\bbP(A^{\otimes 2d})$, which guarantees that it is a normal curve and that its degree is (at least) $d$.
\end{remark}

\subsection{Proof of Theorem~\ref{thm:2d}}
Let $F:=  (\Phi(L(\mu,\nu)))_{\odd,\even}$ be the flattening of interest for some general choice of $[\mu,\nu] \in \pp1$.
Since the rank of the flattening is $\GL(A)$-invariant, it is sufficient to prove $\trank(F) \leq \frac{3}{4} 2^{d}$, as $\Phi(L[\mu,\nu])$ is a general point of $\BP\overline{\TNS}(C_{2d},2,2)$.

We describe a set of elements in the kernel of~$F$.
Let $S \subseteq \odd=\{1,3,\dots,2d-1\}$ be a non-empty subset of even cardinality, $\vert S \vert = 2p$.
Let $D_S = (\alpha^0 \otimes \alpha^1)^{\ot p} - (\alpha^1 \otimes \alpha^0)^{\ot p} \in (A^*)^{\ot S}$ and let
\begin{align}\label{eq:def K_S}
K_S = D_S \otimes (\alpha^0 - \alpha^1)^{\ot (d - 2p)} \in (A^*)^{\ot S} \otimes (A^*)^{\ot (\odd \setminus S)} = (A^*)^{\ot\odd}.
\end{align}
In this section, we will prove that $K_S \in \ker(F)$ for every $S$ and that the $K_S$'s span a subspace of dimension at least $2^{d-2}$.
This provides $\trank(F) \leq 2^d - 2^{d-2} =\frac{3}{4} 2^{d}$ and therefore $\QMF'(C_{2d},(\odd,\even),2) \leq  \frac{3}{4} 2^d$, establishing Theorem~\ref{thm:2d}.

For any fixed $S\subseteq\odd$, the basis vectors $\alpha^\ueta$ in the support of~$K_S$ are labeled by~$d$-tuples~$\ueta$ that are in one-to-one correspondence with elements in $\{0,1\}\times\{0,1\}^{d-2p}$ as follows (here $\ueta'$ denote a $(d-2p)$-tuple):
\begin{align*}
  \{0,1\}\times\{0,1\}^{d-2p} &\to \{0,1\}^S \times \{0,1\}^{\odd\setminus S} \cong \{0,1\}^\odd \\
  (0,\ueta') &\mapsto ((0,1,\dots,0,1), \ueta'), \\
  (1,\ueta') &\mapsto ((1,0,\dots,1,0), \ueta').
\end{align*}
We denote the elements of $\{0,1\}\times\{0,1\}^{d-2p}$ by $\ueta=(t,\ueta')$ and identify them with their image.
The coefficient of $\alpha^\ueta$ in $K_S$ is $\sgn(\ueta):=(-1)^{t + \lvert\ueta'\rvert}$, where $\lvert\ueta'\rvert$ is the sum of the entries of~$\ueta'$, as can be readily seen from~\eqref{eq:def K_S}.
Thus, $K_S = \sum_{\ueta} \sgn(\ueta) \alpha^\ueta$, and if we plug this into~\eqref{eq:formula for F} then we obtain
\begin{align*}
  F(K_S)
= \sum_{\ueps} \left(\sum_{\ueta} \sgn(\ueta) \mu^{2d - \coef(\ueta\smile\ueps)} \nu^{\coef(\ueta\smile\ueps)} \right) a_\ueps.
\end{align*}
Thus we can read off the following criterion:

\begin{lemma}\label{lem:zero if involution}
  The coefficient of $a_\ueps$ in $F(K_S)$ is zero if there exists a permutation $\Theta_\ueps$ of $\{0,1\}\times\{0,1\}^{d-2p}$ such that, for every $\ueta$, $\coef(\ueta\smile\ueps)=\coef(\Theta_\ueps(\ueta)\smile\ueps)$ while $\sgn(\Theta_\ueps(\ueta))=-\sgn(\ueta)$.
\end{lemma}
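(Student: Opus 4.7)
The plan is a straightforward re-indexing argument. Denote by $c(\ueps)$ the coefficient of $a_\ueps$ in $F(K_S)$; per the display preceding the lemma,
\[
c(\ueps) = \sum_{\ueta} \sgn(\ueta)\, \mu^{2d - \coef(\ueta \smile \ueps)}\, \nu^{\coef(\ueta \smile \ueps)},
\]
where $\ueta$ ranges over $\{0,1\}\times\{0,1\}^{d-2p}$. The first move is to substitute $\ueta \mapsto \Theta_\ueps(\ueta)$. Since $\Theta_\ueps$ is assumed to be a permutation of that index set, the substitution is a bijection and the numerical value of the sum is unchanged.

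Next I apply the two hypotheses term by term. The monomial factor $\mu^{2d-\coef(\ueta\smile\ueps)}\nu^{\coef(\ueta\smile\ueps)}$ is preserved under the substitution precisely because $\coef(\Theta_\ueps(\ueta)\smile\ueps) = \coef(\ueta\smile\ueps)$, while $\sgn(\Theta_\ueps(\ueta)) = -\sgn(\ueta)$ globally flips the sign of every summand. The two effects together yield $c(\ueps) = -c(\ueps)$, which forces $c(\ueps) = 0$.

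There is essentially no obstacle at the level of this lemma itself --- it is a one-line bookkeeping statement that packages a sign-reversing pairing argument (and note that $\Theta_\ueps$ need only be a permutation, not an involution, for the argument to go through). The genuine difficulty will appear only afterwards, when one must actually exhibit an explicit permutation $\Theta_\ueps$ meeting both conditions for enough values of $\ueps$ to force the claimed rank drop. I expect that construction to hinge on the combinatorics of the XOR-based definition of $\coef$ together with the built-in sign-reversing symmetry of the ``alternating'' block $D_S$ in the definition of $K_S$, and that is where the real work of Theorem~\ref{thm:2d} will take place.
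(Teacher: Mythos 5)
Your argument is correct and is exactly the reindexing-by-$\Theta_\ueps$ computation the paper has in mind when it says the criterion can be ``read off'' from the displayed formula for $F(K_S)$; the paper gives no separate proof because it regards this as immediate. Your side remark that $\Theta_\ueps$ need only be a permutation (not an involution, despite the label \texttt{lem:zero if involution}) is also accurate.
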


We now show that such permutations always exist, which proves that each $K_S$ is in the kernel.

\begin{proposition}\label{prp:K_S in kernel}
For every subset $S\subseteq\odd=\{1,3,\dots,2d-1\}$ of even cardinality, and for every $\ueps=(\eps_2,\dots,\eps_{2d})$, there exists a permutation $\Theta_\ueps$ as in Lemma~\ref{lem:zero if involution}.
As a consequence, each $K_S\in\ker(F)$.
\end{proposition}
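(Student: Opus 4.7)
The plan is to construct the permutation $\Theta_\ueps$ on a case-by-case basis, distinguishing whether $\ueps$ behaves in a ``flat'' way across every position of $\odd \setminus S$. Concretely, I would ask whether there exists $j \in \odd \setminus S$ with $\eps_{j-1} \neq \eps_{j+1}$.

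\textbf{Easy case.} If such a $j$ exists, define $\Theta_\ueps$ to flip the entry of $\ueta'$ at position~$j$ while fixing~$t$. The sign flips automatically, since $\lvert\ueta'\rvert$ changes parity. For $\coef$-invariance: the only comparisons in $\ueta \smile \ueps$ that involve $\eta_j$ are $[\eta_j \neq \eps_{j-1}]$ and $[\eta_j \neq \eps_{j+1}]$, and the hypothesis $\eps_{j-1}\neq \eps_{j+1}$ forces their sum to equal $1$ independently of $\eta_j$. Hence flipping $\eta_j$ preserves $\coef(\ueta\smile\ueps)$, as needed.

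\textbf{Hard case.} If $\eps_{j-1} = \eps_{j+1}$ for every $j \in \odd \setminus S$, I would set $\Theta_\ueps(t,\ueta') = (1-t,\ueta')$, which flips the sign trivially. The key structural observation is that propagating these equalities across consecutive $\odd\setminus S$-vertices forces $\ueps$ to be constant on each of the $2p$ cyclic arcs of even positions delimited by $S = \{s_1 < \dots < s_{2p}\}$; denote these constants $c_1,\dots,c_{2p}$, where $c_i$ labels the arc between $s_i$ and $s_{i+1}$, so that $\eps_{s_i-1}=c_{i-1}$ and $\eps_{s_i+1}=c_i$. Flipping $t$ flips every $\delta_{s_i}$, affecting only the $4p$ comparisons adjacent to positions of $S$, and a short computation shows the contribution of position $s_i$ to the change in $\coef$ is $2 - 2([x_i\neq c_{i-1}] + [x_i\neq c_i])$, where $x_i$ denotes the $i$-th entry of the alternating pattern $(t,1-t,t,1-t,\dots)$ on $S$.

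The main obstacle is proving that this total change vanishes in the hard case. This is precisely where the antisymmetrization built into $D_S$ becomes essential: the alternation $x_{i+1} = 1 - x_i$ implies $[x_{i+1}\neq c_i] = 1 - [x_i\neq c_i]$. After cyclically reindexing $\sum_i [x_i\neq c_{i-1}] = \sum_i [x_{i+1}\neq c_i]$, each pair $[x_i\neq c_i] + [x_{i+1}\neq c_i]$ collapses to $1$, so the total sum equals $2p$ and the net change in $\coef$ is $4p - 2\cdot 2p = 0$. (The boundary case $\odd\setminus S=\emptyset$, when $S=\odd$ and $d=2p$, is automatically in the hard case, and the same calculation goes through with each arc consisting of a single even position.) Applying Lemma~\ref{lem:zero if involution} to the chosen $\Theta_\ueps$ then shows that the coefficient of each $a_\ueps$ in $F(K_S)$ vanishes, yielding $K_S\in\ker(F)$.
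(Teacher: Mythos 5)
Your proposal is correct, and the two-case split (easy case: some $j\in\odd\setminus S$ with $\eps_{j-1}\neq\eps_{j+1}$, where you flip a single off-$S$ entry; hard case: all arcs constant, where you flip $t$) is exactly the paper's. The easy case is word-for-word the same argument. In the hard case the paper organizes the bookkeeping by \emph{arcs}: it collects the two XOR terms at the endpoints of each arc between consecutive $S$-positions, $(\eta_{k_j}\boxplus\eps_{k_j+1})+(\eps_{k_{j+1}-1}\boxplus\eta_{k_{j+1}})$, observes $\eps_{k_j+1}=\eps_{k_{j+1}-1}$ and $\eta_{k_j}=1-\eta_{k_{j+1}}$, and concludes that $\Theta_\ueps$ merely swaps the two complementary terms within each summand, leaving each equal to~$1$. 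You instead organize by \emph{positions of $S$}, compute the change in each position's contribution as $2-2([x_i\neq c_{i-1}]+[x_i\neq c_i])$, and then cyclically reindex and use alternation to show the changes cancel to zero. These are two bookkeepings of the same cancellation; the paper's is slightly slicker in that it shows each arc's pair of terms is individually invariant (in fact constant, equal to~$1$) rather than summing signed differences, but both are correct and both correctly handle the boundary case $S=\odd$.
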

\begin{proof}
We distinguish two cases.
First, assume that there exists an odd index $k\not\in S$ such that $\eps_{k-1}\neq\eps_{k+1}$.
Define an involution $\Theta_\ueps$ that replaces $\eta_k$ by its complement $\eta_k\boxplus1$ (i.e., if $\eta_k=0$, it is replaced by 1, and vice versa) while leaving all other elements the same.
Then $\sgn(\Theta_\ueps(\ueta))=-\sgn(\ueta)$ for every $\ueta$.
Moreover, $\coef(\Theta_\ueps(\ueta))=\coef(\ueta)$ because the only two terms in the summation~\eqref{eq:def coeff} that involve $\eta_k$ are
\begin{align*}
  (\eps_{k-1} \boxplus \eta_k) + (\eta_k \boxplus \eps_{k+1}) = (0 \boxplus \eta_k) + (1 \boxplus \eta_k) = 1,
\end{align*}
so their sum is independent of the value of $\eta_k$.

Now assume that $\eps_{k-1}=\eps_{k+1}$ for all odd $k\not\in S$.
Define an involution $\Theta_\ueps(t,\ueta') = (t\boxplus1,\ueta')$, i.e., every entry~$\eta_k$ for $k\in S$ is reversed while all other entries are unchanged.
Again, it is clear that $\sgn(\Theta_\ueps(\ueta))=-\sgn(\ueta)$ for every $\ueta$.
We now argue that $\coef(\Theta_\ueps(\ueta))=\coef(\ueta)$.
Let $S=\{k_1<\dots<k_{2p}\}$.
Then the only terms in the summation~\eqref{eq:def coeff} that involve indices in~$S$ are
\begin{align}\label{eq:relevant sum}
\sum_{j=1}^{2p} \left( (\eta_{k_j} \boxplus \eps_{k_j+1}) + (\eps_{k_{j+1}-1} \boxplus \eta_{k_{j+1}}) \right)
\end{align}
where we set $k_{2p+1}:=k_1$.
Note that $\eps_{k_j+1}=\eps_{k_{j+1}-1}$ (for $k_{j+1}=k_j+2$ this is trivial, otherwise use the assumption).
Moreover, $\eta_{k_j} = 1 - \eta_{k_{j+1}}$, which remains true when we apply $\Theta_\ueps$.
As a consequence, applying $\Theta_\eps$ only exchanges 0 and 1 in each summand of~\eqref{eq:relevant sum}, and the total is preserved.
\end{proof}

To conclude the proof of Theorem~\ref{thm:2d}, it remains to show that $\bfK =  \tspan \{ K_S : \emptyset \neq S \subseteq \odd, \lvert S \rvert \text{ even} \}$ is of dimension at least $2^{d-2}$.
In fact, we will prove that $\bfK_{(1)} = \tspan \{ K_S : 1\in S\subseteq \odd, \lvert S\rvert \text{ even} \}$ has dimension equal to $2^{d-2}$.
Note that there are exactly $2^{d-2}$ subsets $S$ of $\{1,3,\dots,2d-1\}$ with even cardinality that contain 1.
Therefore, we need to show that the corresponding $K_S$ are linearly independent.
We start with the following lemma:

\begin{lemma}\label{lem:K decomposition}
We have $\bfK_{(1)} = \bfK_{(1,3)} \op \bfK_{(1,\bar 3)}$, where $\bfK_{(1,3)} := \tspan \{ K_S : 1,3\in S \subseteq\odd, \lvert S\rvert\text{ even} \}$ and $\bfK_{(1,\bar 3)} := \tspan \{ K_S : 1\in S \subseteq\odd, 3\not\in S, \lvert S\rvert\text{ even} \}$.
Further, $\bfK_{(1,3)} = \bfK_{(1,3,5)} \op \bfK_{(1,3,\bar 5)}$, where $\bfK_{(1,3,5)} := \tspan \{ K_S : 1,3,5\in S \subseteq\odd, \lvert S\rvert\text{ even} \}$ and $\bfK_{(1,3,\bar 5)} := \tspan \{ K_S : 1,3\in S \subseteq\odd, 5\not\in S, \lvert S\rvert\text{ even} \}$.
\end{lemma}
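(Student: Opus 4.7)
The plan is to prove both direct-sum decompositions by the same mechanism: locate each generator $K_S$ inside a tensor-product subspace of $(A^*)^{\otimes\odd}$ whose two-slot ``head factor'' is pinned by whether a particular odd index belongs to $S$, then appeal to the elementary fact that if $U_1, U_2 \subseteq W$ satisfy $U_1 \cap U_2 = \{0\}$, then $(U_1 \otimes V) \cap (U_2 \otimes V) = (U_1 \cap U_2) \otimes V = \{0\}$ inside $W \otimes V$.

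For the first claim, the equality $\bfK_{(1)} = \bfK_{(1,3)} + \bfK_{(1,\bar 3)}$ is immediate since every $S \subseteq \odd$ with $1 \in S$ either contains 3 or does not. To prove the sum is direct I decompose $(A^*)^{\otimes \odd} = (A^*_1 \otimes A^*_3) \otimes V$ with $V = (A^*)^{\otimes(\odd \setminus \{1,3\})}$. For $S$ with $1, 3 \in S$ one has $k_1 = 1$ and $k_2 = 3$, so unpacking $D_S = (\alpha^0 \otimes \alpha^1)^{\otimes p} - (\alpha^1 \otimes \alpha^0)^{\otimes p}$ shows its position-$(1,3)$ head factor is either $\alpha^0_1 \otimes \alpha^1_3$ or $\alpha^1_1 \otimes \alpha^0_3$; hence $K_S \in U_1 \otimes V$ for $U_1 := \tspan\{\alpha^0_1 \otimes \alpha^1_3,\ \alpha^1_1 \otimes \alpha^0_3\}$. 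For $S$ with $1 \in S$ but $3 \notin S$, position 3 instead carries the factor $\alpha^0 - \alpha^1$ coming from $(\alpha^0 - \alpha^1)^{\otimes(\odd \setminus S)}$, so $K_S \in U_2 \otimes V$ with $U_2 := A^*_1 \otimes \tspan(\alpha^0_3 - \alpha^1_3)$. A direct check in the $4$-dimensional space $A^*_1 \otimes A^*_3$ (equating coefficients on the basis $\alpha^{ij} := \alpha^i_1 \otimes \alpha^j_3$) gives $U_1 \cap U_2 = \{0\}$, and therefore $\bfK_{(1,3)} \cap \bfK_{(1,\bar 3)} \subseteq (U_1 \cap U_2) \otimes V = \{0\}$.

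The second decomposition $\bfK_{(1,3)} = \bfK_{(1,3,5)} \oplus \bfK_{(1,3,\bar 5)}$ is handled by exactly the same argument, now with positions $(1,3)$ replaced by $(3,5)$. For $S$ with $1, 3, 5 \in S$ one has $k_2 = 3, k_3 = 5$, and the position-$(3,5)$ head factor of $D_S$ is either $\alpha^1_3 \otimes \alpha^0_5$ or $\alpha^0_3 \otimes \alpha^1_5$; for $S$ with $1, 3 \in S$ but $5 \notin S$, position 5 carries the factor $\alpha^0 - \alpha^1$. The same linear-algebra check shows that the corresponding two subspaces of $A^*_3 \otimes A^*_5$ intersect trivially, concluding the proof.

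The main obstacle is a careful piece of bookkeeping in the first step: one must observe that although $D_S = (\text{first term}) - (\text{second term})$ has two summands whose trailing tensor factors differ on the rest of $S$, the two possible values of its position-$(1,3)$ head factor are pinned to the fixed two-dimensional space $U_1$ independently of the trailing data, so that $K_S$ genuinely sits in $U_1 \otimes V$ as a tensor, not merely as a sum of terms that individually do. Once this is in hand, everything else is a brief linear-algebra check.
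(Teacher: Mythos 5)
Your proof is correct, and it takes a route that is close in spirit to the paper's but technically different and arguably cleaner. The paper also exploits the fact that the $(1,3)$-factor of any $K_S$ with $1,3\in S$ lies in $\tspan\{\alpha^0\otimes\alpha^1,\ \alpha^1\otimes\alpha^0\}$ while the $(1,3)$-factor with $3\notin S$ lies in $A^*\otimes\tspan(\alpha^0-\alpha^1)$; however, it proceeds by choosing \emph{generic} linear projections on the remaining factors $5,\dots,2d-1$, landing a putative nonzero $v\in\bfK_{(1,3)}\cap\bfK_{(1,\bar3)}$ in $A^*\otimes A^*$, and then deriving a contradiction from a tensor-rank count: the image is simultaneously of the form $x\,\alpha^0\otimes\alpha^1+y\,\alpha^1\otimes\alpha^0$ (rank two for generic projection) and $(z\alpha^0+w\alpha^1)\otimes(\alpha^1-\alpha^0)$ (rank at most one). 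Your argument instead localizes each generator into $U_i\otimes V$ once and for all and invokes the elementary identity $(U_1\otimes V)\cap(U_2\otimes V)=(U_1\cap U_2)\otimes V$ together with a four-coefficient computation showing $U_1\cap U_2=\{0\}$. What this buys you is a purely linear-algebraic proof that sidesteps genericity altogether, in particular the (small but real) burden of justifying that for a nonzero $v$ both coordinates $x,y$ of the generic image are simultaneously nonzero. What the paper's phrasing buys in exchange is brevity: it does not need the explicit basis bookkeeping or the tensor-product-intersection lemma. Both proofs hinge on the same head-factor observation; yours makes the direct-sum structure more transparent.

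One small point worth spelling out explicitly, which you flag but could make tighter: the claim ``$K_S\in U_1\otimes V$'' is not just that each of the two simple terms of $D_S\otimes(\alpha^0-\alpha^1)^{\otimes(d-2p)}$ has its $(1,3)$-slot in $U_1$ --- it is that the whole tensor sits in $U_1\otimes V$ because $U_1\otimes V$ is a linear subspace closed under sums, so any sum of simple tensors each with $(1,3)$-factor in $U_1$ lies in it. Your final paragraph correctly identifies this as the crux and your reasoning is sound.
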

\begin{proof}
Clearly $\bfK_{(1,3)}$ and $\bfK_{(1,\bar3)}$ generate $\bfK_{(1)}$.
It remains to show that their intersection is trivial.
Say it were not and let $v\in\bfK_{(1,3)} \cap \bfK_{(1,\bar3)}$ be nonzero.
Choose generic projections $A^* \to \bbC$ on the factors $5 \vvirg 2d-1$ of $(A^*)^{\ot\odd}$, and consider the image of $v$ in $A^* \otimes A^*$.
On the one hand, since $v \in \bfK_{(1,3)} $, the image is of the form $x \alpha^0 \otimes \alpha^1 + y \alpha^1 \otimes \alpha^0$
By the genericity of the projections, we may assume that both $x$ and $y$ are nonzero, so that the tensor has rank two.
On the other hand, since $v \in \bfK_{(1,\bar3)}$, the image is of the form of $(z \alpha^0 + w \alpha^1) \otimes (\alpha^1 - \alpha^0)$, a rank-one tensor.
This is a contradiction.
The second statement is proved analogously.
\end{proof}

We now show that $\bfK_{(1)}$ has the desired dimension.

\begin{lemma}\label{lem:dim K1}
For $d\geq2$, we have that $\dim \bfK_{(1)} = 2^{d-2}$.
\end{lemma}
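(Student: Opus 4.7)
The plan is to prove this by induction on $d$, using Lemma \ref{lem:K decomposition} and iterated analogs of it as the engine. For the base case $d=2$, the unique subset with $1\in S\subseteq\{1,3\}$ of even cardinality is $S=\{1,3\}$, and $K_{\{1,3\}}=\alpha^0\otimes\alpha^1-\alpha^1\otimes\alpha^0\neq 0$, giving $\dim\bfK_{(1)}=1=2^{d-2}$.

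For the inductive step I would iteratively apply Lemma \ref{lem:K decomposition} (and its natural generalizations) to split $\bfK_{(1)}$ into smaller and smaller subspaces. Lemma \ref{lem:K decomposition} already provides the first two splits. By the same projection-and-rank argument, I would prove an analog of the lemma for every partial ``in/out'' assignment $\sigma$ on $\{3,5,\dots,2j-1\}$ and every next odd index $2j+1$: namely, $\bfK_{\sigma}=\bfK_{\sigma,\,2j+1\in}\oplus\bfK_{\sigma,\,2j+1\notin}$. The strategy is identical to the proof of Lemma \ref{lem:K decomposition}: apply generic linear projections $A^*\to\bbC$ on all factors of $(A^*)^{\ot\odd}$ outside positions $\{1,3,\dots,2j+1\}$; for any $K_S$ with $2j+1\in S$ the image exhibits an ``alternating'' rank-$2$ pattern at position $2j+1$ inherited from $D_S$, whereas for any $K_S$ with $2j+1\notin S$ the image factors as $(\,\cdot\,)\otimes(\alpha^0-\alpha^1)$ at that position, which is rank $1$ there. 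These two shapes intersect only at zero.

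Iterating through all $d-1$ positions $3,5,\dots,2d-1$ yields a direct sum decomposition of $\bfK_{(1)}$ indexed by the $2^{d-1}$ assignments in $\{\mathrm{in},\mathrm{out}\}^{d-1}$. Each assignment pins down a unique $S$ with $1\in S$, so the corresponding summand is at most one-dimensional; it is nonzero exactly when $|S|$ is even. Exactly half of the assignments, namely $2^{d-2}$, satisfy the parity constraint, and for each such $S$ the vector $K_S$ is nonzero (it is a sum of distinct standard basis vectors of $(A^*)^{\ot\odd}$ with $\pm 1$ coefficients). Hence $\dim\bfK_{(1)}=2^{d-2}$.

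The main obstacle will be verifying that the projection-and-rank argument of Lemma \ref{lem:K decomposition} genuinely generalizes to every step. Concretely, one has to check that after the earlier ``in/out'' choices have been absorbed into $\sigma$, the images of the relevant $K_S$ still cleanly split into the rank-$2$ and rank-$1$ patterns at position $2j+1$, and that a generic common element of $\bfK_{\sigma,\,2j+1\in}\cap\bfK_{\sigma,\,2j+1\notin}$ would simultaneously have rank $\le 1$ and rank $\ge 2$ at that position. Managing this bookkeeping---and in particular controlling how the $D_S$ patterns at positions already fixed by $\sigma$ interact with the new split---should be conceptually routine but is where the real work lies.
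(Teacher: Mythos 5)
Your proposal is correct, and the core idea---projecting onto a few factors and distinguishing the ``in'' pattern from the ``out'' pattern at the newly split position---is the same engine the paper uses.  But the organization is genuinely different from the paper's.  The paper only proves two concrete splits (the two direct sums in Lemma~\ref{lem:K decomposition}) and then sets up a two-variable recursion in $d$: it exhibits explicit isomorphisms $\bfK^{(d)}_{(1,\bar 3)}\cong\bfK^{(d-1)}_{(1)}$, $\bfK^{(d)}_{(1,3,\bar 5)}\cong\bfK^{(d-1)}_{(1,3)}$, and $\bfK^{(d)}_{(1,3,5)}\cong\bfK^{(d-2)}_{(1)}$ (the last via contraction with $a_0\ot a_1+a_1\ot a_0$ at positions $1,3$), proves $\dim\bfK^{(d)}_{(1)}=2^{d-2}$ and $\dim\bfK^{(d)}_{(1,3)}=2^{d-3}$ simultaneously, and never formulates a splitting lemma for general partial assignments.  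You instead prove a strengthened version of Lemma~\ref{lem:K decomposition} for \emph{every} partial in/out assignment $\sigma$, unroll the splitting through all $d-1$ positions, and count the resulting one-dimensional blocks.  Your route buys you a cleaner endgame (no recursion, and as a byproduct it exhibits the $K_S$ as an explicit basis of $\bfK_{(1)}$); the paper's route gets away with citing Lemma~\ref{lem:K decomposition} as stated, at the cost of the isomorphism bookkeeping.

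Two small points worth tightening if you write this up.  First, you would need to actually state and prove the generalized splitting lemma rather than gesture at ``the strategy is identical''---the key observation that makes it go through is that once $\sigma$ is fixed, every $K_S$ compatible with $\sigma$ restricts to the same two linearly independent patterns $P,\bar P\in(A^*)^{\ot\{1,\dots,2j-1\}}$ in the two terms of $D_S$, so after projecting out positions $>2j+1$ the ``in'' images lie in $\tspan\{P\ot\alpha^t,\bar P\ot\alpha^{1-t}\}$ while the ``out'' images lie in $\tspan\{P,\bar P\}\ot\langle\alpha^0-\alpha^1\rangle$, and these two planes intersect only in zero.  Second, phrasing the distinction purely as ``rank $2$ vs.\ rank $1$ at position $2j+1$'' is slightly imprecise: the clean statement is that the two ambient planes just described are transverse, from which the contradiction follows whenever the generic projection of a putative nonzero $v$ in the intersection is nonzero (which it is, since one may project onto any monomial in the support of $v$).
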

\begin{proof}
We will prove that $\dim \bfK^{(d)}_{(1)} = 2^{d-2}$ and $\dim \bfK^{(d)}_{(1,3)} = 2^{d-3}$ by induction on $d$.
The base cases are as follows:
$\bfK^{(2)}_{(1)}=\langle K_{\{1,3\}} \rangle$,
$\bfK^{(3)}_{(1)}=\langle K_{\{1,3\}}, K_{\{1,5\}} \rangle$, and
$\bfK^{(3)}_{(1,3)}=\langle K_{\{1,3\}} \rangle$
(recall we only consider nonempty subsets of even cardinality).

Now let $d\geq4$.
Using Lemma~\ref{lem:K decomposition}, we have
\begin{align*}
\bfK^{(d)}_{(1)}
= \bfK^{(d)}_{(1,3)} \op \bfK^{(d)}_{(1,\bar3)}, \quad
\bfK^{(d)}_{(1,3)}
= \bfK^{(d)}_{(1,3,5)} \op \bfK^{(d)}_{(1,3,\bar5)}.
\end{align*}
Note that $\bfK^{(d)}_{(1,\bar 3)} \subseteq (\alpha^0-\alpha^1)\ot (A^*)^{\ot(\odd\setminus\{3\})}$, so we see that $\bfK^{(d)}_{(1,\bar 3)} \cong \bfK^{(d-1)}_{(1)}$.
Similarly, $\bfK^{(d)}_{(1,3,\bar 5)} \subseteq (\alpha^0-\alpha^1)\ot (A^*)^{\ot(\odd\setminus\{5\})}$ and $\bfK^{(d)}_{(1,3,\bar 5)} \cong \bfK^{(d-1)}_{(1,3)}$.
Finally, note that $\bfK^{(d)}_{(1,3,5)}\cong\bfK^{(d-2)}_{(1)}$.
Indeed, both spaces have the same number of generators; contraction with $a_0\ot a_1 + a_1\ot a_0 \in A^{\ot\{1,3\}}$ maps each generator $K_S$ onto a generator $K_{S'}$, where $S' = \{ k - 4 : k \in S, k\neq1,3\}$; all the latter are distinct and therefore linearly independent by the induction hypothesis.
Thus:
\begin{align*}
  \bfK^{(d)}_{(1)} \cong \bfK^{(d)}_{(1,3)} \op \bfK^{(d-1)}_{(1)}, \quad
  \bfK^{(d)}_{(1,3)} \cong \bfK^{(d-2)}_{(1)} \op \bfK^{(d-1)}_{(1,3)},
\end{align*}
and so
\begin{align*}
\dim\bfK^{(d)}_{(1,3)} = 2^{(d-2)-2} + 2^{(d-1)-3} = 2^{d-3}, \quad
\dim\bfK^{(d)}_{(1)} = 2^{d-3} + 2^{(d-1)-2} = 2^{d-2}
\end{align*}
using the induction hypothesis.
This concludes the proof.
\end{proof}

Proposition~\ref{prp:K_S in kernel} and Lemma~\ref{lem:dim K1} together establish Theorem~\ref{thm:2d}.\qed{}

\begin{acknowledgments}
We thank Shrawan Kumar for a suggestion regarding the proof of Lemma~\ref{lem:mults}.

Gesmundo supported by the European Research Council (ERC Grant Agreement no.~337603), the Danish Council for Independent Research (Sapere Aude), and VILLUM FONDEN via the QMATH Centre of Excellence (grant no.~10059).
Landsberg supported by NSF DMS-1405348 and NSF CCF-1814254.
Walter acknowledges support by the Simons Foundation, AFOSR (grant no.~FA9550-16-1-0082), and the NWO (grant no.~680-47-459).
\end{acknowledgments}

\bibliography{GLWqmf}

\end{document}